\newtheorem{definition}{Definition}
\newtheorem{lemma}{Lemma}
\newenvironment{breakablealgorithm}
{% \begin{breakablealgorithm}
		\begin{center}
			\refstepcounter{algorithm}% New algorithm
			\hrule height.8pt depth0pt \kern2pt% \@fs@pre for \@fs@ruled
			\renewcommand{\caption}[2][\relax]{% Make a new \caption
				{\raggedright\textbf{\ALG@name~\thealgorithm} ##2\par}%
				\ifx\relax##1\relax % #1 is \relax
				\addcontentsline{loa}{algorithm}{\protect\numberline{\thealgorithm}##2}%
				\else % #1 is not \relax
				\addcontentsline{loa}{algorithm}{\protect\numberline{\thealgorithm}##1}%
				\fi
				\kern2pt\hrule\kern2pt
			}
		}{% \end{breakablealgorithm}
		\kern2pt\hrule\relax% \@fs@post for \@fs@ruled
	\end{center}
}
\DeclareSymbolFont{usualmathcal}{OMS}{cmsy}{m}{n}
\DeclareSymbolFontAlphabet{\mathcal}{usualmathcal}
\begin{document}

\pagestyle{SPstyle}

\begin{center}{\Large \textbf{\color{scipostdeepblue}{
%%%%%%%%%% TODO: Write your article's title here
Improved belief propagation decoding algorithm based on decoupling representation of Pauli operators for quantum LDPC codes\\
%%%%%%%%%% END TODO: TITLE
}}}\end{center}

\begin{center}\textbf{
%%%%%%%%%% TODO: AUTHORS
% Write the author list here. 
% Use (full) first name (+ middle name initials) + surname format.
% Separate subsequent authors by a comma, omit comma and use "and" for the last author.
% Mark the corresponding author(s) with a superscript symbol in this order
% \star, \dagger, \ddagger, \circ, \S, \P, \parallel, ...
Zhengzhong Yi\textsuperscript{1},
Zhipeng Liang\textsuperscript{1},
Kaixin Zhong\textsuperscript{1},
Yulin Wu\textsuperscript{1},
Zhou Fang\textsuperscript{1} and
Xuan Wang\textsuperscript{1$\star$}
%%%%%%%%%% END TODO: AUTHORS
}\end{center}

\begin{center}
%%%%%%%%%% TODO: AFFILIATIONS
% Write all affiliations here.
% Format: institute, city, country
{\bf 1} Harbin Institute of Technology (Shenzhen), Shenzhen, China
%\\
%{\bf 2} Affiliation2
%\\
%{\bf 3} Affiliation3
%%%%%%%%%% END TODO: AFFILIATIONS
%%%%%%%%%% TODO: EMAIL
% Provide email address of corresponding author(s)
\\[\baselineskip]
$\star$ \href{mailto:email1}{\small wangxuan@cs.hitsz.edu.cn}
%\,,\quad
%$\dagger$ \href{mailto:email2}{\small email2}
%%%%%%%%%% END TODO: EMAIL
\end{center}

\section*{\color{scipostdeepblue}{Abstract}}
\boldmath\textbf{%
%%%%%%%%%% TODO: ABSTRACT
% Write your abstract here.
We propose a new method called decoupling representation to represent Pauli operators as vectors over $GF(2)$, based on which we propose partially decoupled belief propagation and fully decoupled belief propagation decoding algorithm for quantum low density parity-check codes. These two algorithms have the capability to deal with the correlations between the $X$ part and the $Z$ part of the vectors in symplectic representation, which are introduced by Pauli $Y$ errors. Hence, they can not only apply to CSS codes, but also to non-CSS codes. Under the assumption that there is no measurement error, compared with traditional belief propagation algorithm in symplectic representation over $GF(2)$, within the same number of iterations, the decoding accuracy of partially decoupled belief propagation and fully decoupled belief propagation algorithm is significantly improved in pure $Y$ noise and depolarizing noise, which supports that decoding algorithms of quantum error correcting codes might have better performance in decoupling representation than in symplectic representation. The impressive performance of fully decoupled belief propagation algorithm might promote the realization of quantum error correcting codes in engineering.
%%%%%%%%%% END TODO: ABSTRACT
}

\vspace{\baselineskip}

%%%%%%%%%% BLOCK: Copyright information
% This block will be filled during the proof stage, and finilized just before publication.
% It exists here only as a placeholder, and should not be modified by authors.
\noindent\textcolor{white!90!black}{%
\fbox{\parbox{0.975\linewidth}{%
\textcolor{white!40!black}{\begin{tabular}{lr}%
  \begin{minipage}{0.6\textwidth}%
    {\small Copyright attribution to authors. \newline
    This work is a submission to SciPost Physics. \newline
    License information to appear upon publication. \newline
    Publication information to appear upon publication.}
  \end{minipage} & \begin{minipage}{0.4\textwidth}
    {\small Received Date \newline Accepted Date \newline Published Date}%
  \end{minipage}
\end{tabular}}
}}
}
%%%%%%%%%% BLOCK: Copyright information

%%%%%%%%%% TODO: LINENO
% For convenience during refereeing we turn on line numbers:
%\linenumbers
% You should run LaTeX twice in order for the line numbers to appear.
%%%%%%%%%% END TODO: LINENO

%%%%%%%%%% TODO: TOC 
% Guideline: if your paper is longer that 6 pages, include a TOC
% To remove the TOC, simply cut the following block
\vspace{10pt}
\noindent\rule{\textwidth}{1pt}
\tableofcontents
\noindent\rule{\textwidth}{1pt}
\vspace{10pt}
%%%%%%%%%% END TODO: TOC

%%%%%%%%% TODO: CONTENTS 
% Write your article contents here, starting from first \section.
% An example structure is given below.

\section{Introduction}
\label{1}
Quantum computing shows a potential to solve certain problems which classical computing might not solve within an acceptable time and resources\cite{365700,arute2019quantum,zhong2020quantum}.  However, in current quantum computers, qubits are easily disturbed by environmental noise and quantum operations on qubits are not accurate enough\cite{wang2017single}, which will destroy the quantum information carried by the systems. This results that current quantum computers have difficulties to fulfil the potential of quantum computing. Fortunately, in 1996, Shor  proposed that we can protect quantum information in an imperfect quantum system by quantum error correction codes (QECCs)\cite{PhysRevA.52.R2493}. 

Stabillizer codes\cite{gottesman1997stabilizer,calderbank1998quantum} are an important and promising family of QECCs. In recent years, among the stabilizer codes, quantum low density parity-check (QLDPC) codes\cite{mackay2004sparse,babar2015fifteen,gottesman2013fault,tillich2013quantum,bombin2006topological,breuckmann2021quantum,xu2023tailored,roffe2022bias,bravyi1998quantum,freedman2001projective,fowler2012surface,kitaev1997quantum,kitaev2003fault,bonilla2021xzzx}, such as planar surface codes\cite{bravyi1998quantum,freedman2001projective,fowler2012surface}, toric codes\cite{kitaev1997quantum,kitaev2003fault} and XZZX surface codes\cite{bonilla2021xzzx}, tend to attract more and more attentions\cite{google2021exponential,google2023suppressing,marques2022logical}  due to the following two reasons. First, their stabilizer generators have low weight, which means that the number of qubits involved by each generator is small. In general, the less qubits involved, the more accurate the measurement of the generators is. Besides, low weight stabilizer generators are the prerequisite for a QECC having good operation locality, which will also benefit the accuracy of the stabilizer generator measurement. Second, the number of stabilizer generators involved by a certain qubit is finite. These features can simplify decoding process. For instance, for a QLDPC code, if the number of its stabilizer generators involved by a certain qubit is less than 2, we can use minimum-weight perfect matching (MWPM)\cite{edmonds1965paths} algorithm to decode it. Otherwise, the decoding problem is a hypergraph matching problem\cite{delfosse2014decoding,wang2009graphical} and there is still no efficient algorithm to deal with it\cite{zhang2021quantum}.

Typical decoding algorithms for QLDPC codes include MWPM algorithm\cite{edmonds1965paths}, tensor network (TN)\cite{bravyi2014efficient} algorithm, renormalization group (RG)\cite{duclos2010fast} algorithm and belief propagation (BP)\cite{babar2015fifteen,roffe2020decoding,kuo2020refined,kuo2020refined1,lai2021log,kuo2022exploiting,poulin2008iterative} algorithm. Among these algorithms, MWPM, TN and RG algorithms ask for the codes having a clear geometric layout,  while BP algorithm does not ask for this. %%Besides, BP decoding can be applied to all quantum stabilizer codes.

BP decoding algorithm is an iterative algorithm and the time complexity of each iteration is $O(n)$. Each round of BP decoding is composed of multiple iterations  and at each iteration it will give out an estimated error. If the error satisfies the check equations, we say BP converges and BP returns this error as decoding result. Otherwise, BP will restart a new iteration until it converges or the maximum number of iterations is reached. If BP cannot converge after all iterations, we say BP “fails” in this round. In practice, the maximum iteration number of BP can always be chosen as a large enough constant $T_{max}$ or to be proportional to $loglogn$\cite{mackay1999good}, thus the time complexity of BP is $O(nT_{max})$ or $O(nloglogn)$, where $n$ is the code length, and this low complexity makes it a promising algorithm for real-time error correcting scenario. %%However, at present, the decoding accuracy of BP is unsatisfactory, especially when the noise is $Y$-biased (we will show this in Sect. \ref{4}), which is the biggest obstacle to apply BP.

According to the finite field of the codes decoded by BP, the application of BP can be divided into two classes — quaternary BP\cite{babar2015fifteen,kuo2022exploiting,kuo2020refined,kuo2020refined1,poulin2008iterative} which decodes codes over $GF(4)$ and binary BP\cite{babar2015fifteen,roffe2020decoding} which decodes codes over $GF(2)$. Due to the existence of three Pauli error operators and the identity operator, QECCs are naturally codes over $GF(4)$. However, in most cases, researchers are used to deal with these QECCs over $GF(2)$, using the method of representing parity-check matrices and Pauli operators shown in Eq. (10.83) in Ref. \cite{nielsen2002quantum}, namely, the \textbf{\textit{symplectic representation}}. 

However, both quaternary BP and binary BP faces with some specific difficulties.

For quaternary BP, since the Tanner graph of a quantum stabilizer code contains many 4-cycles, its decoding convergence rate of a single round is low\cite{babar2015fifteen,kuo2020refined,kuo2020refined1,poulin2008iterative}. To improve the performance of quaternary BP, some methods are proposed. For example, In Ref. \cite{kuo2022exploiting}, the researchers exploit the degeneracy of QLDPC codes and propose the method of performing multiple rounds of BP decoding, while the number of rounds is influenced by a set of parameters related to decoding. However, these parameters should be chosen precisely in specific situation, which is difficult to do in practice\cite{kuo2022exploiting}. In addition, how many rounds are needed is not clear, which makes the total time complexity of the whole decoding process still unknown. Another example is using quaternary ordered statistics decoding (OSD)\cite{Panteleev2021degeneratequantum,kung2023belief} as post-processing to  improve the performance of quaternary BP when it fails to converge.

For binary BP, the symplectic representation of Pauli operators can reduce 4-cycles on the Tanner graph of the binary parity-check matrix of the corresponding quantum stabilizer code. However, this representation results that Pauli $Y$ errors introduce the correlations between vectors $\textbf{e}_x$ and $\textbf{e}_z$ (which we will explain in Sect. \ref{3.1}) that represent Pauli operators, which will decrease the performance of binary BP. To deal with this problem, the binary BP using the X/Z correlations for CSS codes is proposed\cite{6874997}. However, for non-CSS codes, especially in $Y$-biased noise, even combined with OSD, the error correction performance of binary BP in symplectic representation is unsatisfactory.

In this paper, we focus on binary BP decoding, and improve its performance by eliminating the correlations introduced by Pauli $Y$ errors, which can apply  to both non-CSS and CSS codes, especially in $Y$-biased noise. This is why we perform simulations on XZZX surface codes and planar surface codes, which are well-known non-CSS and CSS codes, respectively. 

In symplectic representation, the single-qubit Pauli operators are represented by two bits. Hence, for a $[[n,k]]$ quantum stabilizer code, where $n$ is the code length and $k$ is the number of logical qubits, the Pauli errors acting on $n$ qubits are represented by a binary vector with size of $2n$, and its parity-check matrix can be represented by a binary matrix with dimension of $(n-k)\times 2n$.
In this paper, we propose a new general method to represent Pauli operators and the parity-check matrices of QECCs. Different from the symplectic representation, this new method represents single-qubit Pauli operators by three bits. In this new representation, for a $[[n,k]]$ quantum stabilizer code, the Pauli errors acting on $n$ qubits are represented by binary vectors with size of $3n$, and its parity-check matrix can be represented by a binary matrix with dimension of $(n-k)\times 3n$. By this method, since the representations of Pauli $X$, $Y$ and $Z$ errors seem to be decoupled from each other (we will explain this in Sect. \ref{3}), we call this method the \textbf{\textit{decoupling representation}}.  Because of this feature, we conjecture that decoding algorithms  tend to have more balanced capability to decode Pauli $X$, $Y$ and $Z$ errors, which might lead to a better decoding performance under certain noise channels, such as $Y$-biased noise channel and depolarizing noise channel where all the three Pauli errors might happen.  

Based on decoupling representation, we modified binary BP algorithm for QLDPC codes. In decoupling representation, the Tanner graph, message update  rules and hard decision need modifying. We first only modify the Tanner graph and obtain \textbf{\textit{partially decoupled binary BP}} (\textbf{\textit{PDBP}}). Then we perform simulations to compare the performance of PDBP and SBP (in convenience, we call binary BP in symplectic representation the  \textbf{\textit{symplectic BP}} or  \textbf{\textit{SBP}}). All the simulations in this paper are combined with order$-0$ OSD and under the assumption that there is no measurement error in syndrome detection. Simulation results show that, for XZZX surface code and planar surface code, the capability of PDBP to decode Pauli $X$ and $Z$ error is the same as that of SBP. Besides, PDBP shows the balanced capability to decode Pauli $X$, $Y$, $Z$ errors, while SBP shows much stronger capability to decode Puali $X$, $Z$ errors than Pauli $Y$ errors, which means that PDBP has stronger capability to decode Pauli $Y$ errors. In consequence,  under the depolarizing noise channel, the decoding accuracy of PDBP is higher than that of SBP. This result supports our conjecture about the performance of decoding algorithms in decoupling representation, which is mentioned in the previous paragraph.

To further improve the decoding accuracy, based on PDBP we modify the rules of message update and hard decision, and obtain  \textbf{\textit{fully decoupled binary BP}} (\textbf{\textit{FDBP}}). These modifications make the message update and hard decision compatible with the decoupling representation which requires that in the binary vector with size of $3n$ which represents a Pauli operator acting on $n$ qubits, no more than one 1 can show up in the same positions corresponding to the same qubit. Simulation results show that FDBP performs even better than PDBP in the sense of decoding accuracy. Moreover, the code-capacity thresholds of FDBP on XZZX surface code and planar surface code are more close to theoretical values. The impressive performance of FDBP might promote the realization of QECC in engineering.

The rest of paper is organized as follows. In Sect. \ref{2}, we introduce some preliminaries, including quantum stabilizer codes, QLDPC codes, Tanner graph, the fundamentals of BP algorithm and ordered statistics decoding (OSD). In Sect. \ref{3}, we introduce the idea of decoupling representation of Pauli operators, decoupled parity-check matrix, FDBP and PDBP. The simulation results are presented in Sect. \ref{4}. In Sect. \ref{5}, we conclude our work and analyze the existing problems.

\section {Preliminaries}
\label{2}

In this section, we briefly introduce some preliminaries, including stabilizer codes, QLDPC codes, Tanner graph, the fundamentals of BP algorithm and OSD.
\subsection {Quantum stabilizer codes}
\label{2.1}
Quantum stabilizer codes are an important and promising class of QECCs, whose code construction is analogous to classical linear codes.

Quantum stabilizer codes are based on the concept of a subgroup of the Pauli group, which is called stabilizer group. The Pauli group $\mathcal{G}_1$ on a single qubit is a set which consists of all Pauli operators together with multiplicative constants $\{\pm1,\pm i\}$, namely, $\mathcal{G}_1=\{\pm I,\pm iI,\pm X,\pm iX,\pm Y,\\
\pm iY,\pm Z,\pm iZ\}$. The general Pauli group $\mathcal{G}_n$ on $n$ qubits is $n$-fold tensor product of $\mathcal{G}_1$, namely, $\mathcal{G}_n=\mathcal{G}_1^{\otimes n}$. The weight of an operator $P\in \mathcal{G}_n$ is defined as the number of qubits on which it acts nontrivially, and we use notation $wt(\cdot)$ to denote it. For instance, $wt(I_1 X_2 Y_3 Z_4 )=3$.

An $[[n,k]]$ quantum stabilizer code $C$ uses $n$ physical qubits to encode $k$ logical qubits into a subspace $Q_C$ of $(\mathbb{C}^2 )^{\otimes n}$, namely, the code space. In fact, the space $Q_C$ is the common +1-eigenspace of a set of independent commuting operators $S_1,\cdots,S_{n-k}\in \mathcal{G}_n$, which are called stabilizer generators. More precisely,
\begin{equation}
	Q_C=\{\ket{\varphi}\in (\mathbb{C}^2 )^{\otimes n}:S\ket{\varphi}=\ket{\varphi},\forall S\in \mathcal{S}\}
\end{equation}
where $\mathcal{S}$ is the stabilizer group of code $C$ generated by $S_1,\cdots,S_{n-k}$, namely, $\mathcal{S}=\langle S_1,\cdots,S_{n-k} \rangle$. We can see that giving a set of stabilizer generators  $\langle S_1,\cdots,S_{n-k}\rangle$  of code $C$ is equivalent to explicitly giving the code space $Q_C$.

The error syndrome $\textbf{s}=(s_1,\cdots,s_{n-k})$ of an error $E\in \mathcal{G}_n$ is a binary vector, where $s_i=1$ if $E$ anti-commutes with $S_i$ and 0 otherwise.

In quantum information theory, researchers usually map Pauli $I$, $X$, $Y$, $Z$ operators onto $GF(4)$ or $GF(2)$. Intuitively, these 4 Pauli operators can be individually represented by 4 elements in $GF(4)$, namely,

\begin{equation}
	I\rightarrow0,\ X\rightarrow1,\ Z\rightarrow \omega,\ Y\rightarrow \bar{\omega}
\end{equation}
where $0,\ 1,\ \omega,\ \bar{\omega}\in GF(4)$.

Based on this Pauli-to-$GF(4)$ isomorphism, the $(n-k)$ stabilizer generators of a $[[n,k]]$ quantum stabilizer code make up the rows of the parity-check matrix over $GF(4)$ with dimension of $(n-k)\times n$.

Besides, the $I$, $X$, $Y$, $Z$ Pauli operators can also be mapped onto $GF(2)^2$, namely,
\begin{equation}
	I\rightarrow(0,0),\ X\rightarrow(1,0),\ Z\rightarrow (0,1),\ Y\rightarrow (1,1)
\end{equation}

Based on this Pauli-to-$GF(2)$ isomorphism, the binary parity-check matrix $H$ of a $[[n,k]]$ quantum stabilizer code is a block matrix with dimension $(n-k)\times 2n$, which consists of two $(n-k)\times n$ binary matrices $H_x$ and $H_z$, namely,
\begin{equation}
	H=(H_x\mid H_z)
\end{equation}

In the same way, any operator $E\in \mathcal{G}_n$ acting on $n$ qubits can be represented as a binary vector $\textbf{e}=(\textbf{e}_x\mid \textbf{e}_z )$, then the syndrome $\textbf{s}$ of $E$ is computed by
\begin{equation}
	\label{syndrome}
	s=(H_x\cdot \textbf{e}_z + H_z\cdot \textbf{e}_x)\ mod\ 2
\end{equation}

\subsection {Quantum low density parity-check codes}
\label{2.2}
Quantum low density parity-check (QLDPC) codes are the analogue of classical low density parity-check (LDPC) codes \cite{gallager1962low,mackay1999good} in quantum fields and a promising class of quantum stabilizer codes.

A LDPC code can be characterized by two constants $l_r$ and $l_c$, which represents the upper bound of row weight and column weight of its parity-check matrix respectively. The name LDPC comes from the feature of these codes that their parity-check matrices are sparse. Similarly, a QLDPC code can also be characterized by two constants $w_r$ and $w_c$, which represents the upper bound of row weight and column weight of its parity-check matrix over $GF(4)$, respectively. Since the rows of parity-check matrix correspond to the stabilizer generators, and columns correspond to qubits, one can also interpret $w_r$ as the upper bound of the weight of each stabilizer generator and $w_c$ as the upper bound of the number of stabilizer generators involved with a certain qubit. For instance, the $w_r$ and $w_c$ of toric code are 4 and 4 respectively. Same as LDPC codes, the parity-check matrices of QLDPC codes are also sparse.

\subsection {Tanner graph, belief propagation, and ordered statistics decoding}
\label{2.3}

\subsubsection {Tanner graph}
\label{2.3.1}
The parity-check matrix $H$ of an error correction code (both for quantum codes and classical codes) can be depicted by a bipartite graph $G=(V,C,E)$ called Tanner graph. Assume the dimension of $H$ is $m\times n$, then $V=\{v_1,\cdots,v_n\}$ is the set of $n$ variable nodes, $C=\{c_1,\cdots,c_m\}$ is the set of $m$ check nodes and $E=\{(c_i,v_j ),\ 1\le i\le m,\ 1\le j\le n\}$ is the set of edges that connect variable nodes $v_j$  and check nodes $c_i$  if $H_{ij}\neq0$ for all $1\le i\le m$, $1\le j\le n$. Here, we use notations $\mathcal{N}(c_i)$ to denote the set of neighboring variable nodes of the check node $c_i$ and $\mathcal{N}(v_j)$ the set of neighboring check nodes of the variable node $v_j$. If there is an edge between $c_i$ and $v_j$, then $v_j$ is a neighboring variable node of $c_i$, and $c_i$ is a neighboring check node of $v_j$. We use notations $\mathcal{N}(c_i)\slash v_{j^{\prime}}$ and $\mathcal{N}(v_j )\slash c_{i^{\prime}}$ to denote the set of neighboring variable nodes of the check node $c_i$ excluding $v_{j^{\prime}}$ and the set of neighboring check nodes of the variable node $v_j$ excluding $c_{i^{\prime}}$, respectively.

In Fig. \ref{fig:Tanner graph}, we show two examples of quantum stabilizer code’s Tanner graph. Fig. \ref{fig:Tanner graph}(a) shows the Tanner graph of a quantum stabilizer code with parity-check matrix $H_1$ over $GF(4)$. Fig. \ref{fig:Tanner graph}(b) shows the corresponding Tanner graph of the same code whose parity-check matrix is transformed to $H_2$ over $GF(2)$ in symplectic representation. One can find that there are two classes of variable nodes in Fig. \ref{fig:Tanner graph}(b). One class corresponds to vector $\textbf{e}_x$ and the other corresponds to vector $\textbf{e}_z$.
\begin{figure*}[htbp]
	\centering
	\includegraphics[width=1\textwidth]{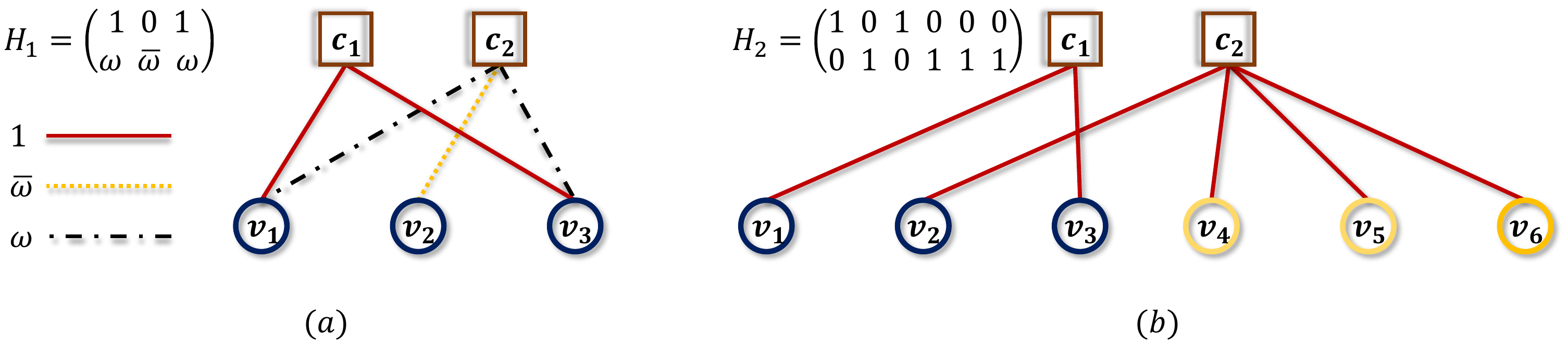}
	\caption{\textbf{Two kinds of Tanner graph of a quantum stabilizer code $C$ whose stabilizer generators are $S_1=XIX$ and $S_2=ZYZ$.} (a) The Tanner graph of parity-check matrix $H_1=\left(\begin{matrix} 1 & 0 & 1 \\ \omega  & \bar{\omega} &\omega \end{matrix}\right)$ over $GF(4)$ corresponding to code $C$. (b) The Tanner graph of parity-check matrix $H_2=\left(\begin{matrix} 1 & 0 & 1 & 0 & 0 & 0\\0 & 1 &0 & 1& 1 & 1\end{matrix}\right)$ over $GF(2)$ in symplectic representation corresponding to code $C$. Variable nodes $v_1$, $v_2$ and $v_3$ corresponds to vector $\textbf{e}_z$ and variable nodes $v_4$, $v_5$ and $v_6$ corresponds to vector $\textbf{e}_x$.}
	\label{fig:Tanner graph}
\end{figure*}

\subsubsection {Belief propagation}
\label{2.3.2}
Given a syndrome $\textbf{s}=(s_1,\cdots,s_{n-k})$ of a $[[n,k]]$ QLDPC code, the decoding algorithms are supposed to find out the most likely error $\hat{E}$ whose syndrome is consistent with $\textbf{s}$, namely,
\begin{equation}
	\hat{E}=\mathop{\arg\max}\limits_{E}\ P(E|\textbf{s})
\end{equation}
However, directly computing $P(P|\textbf{s})$ for each possible value of $E$ is a $NP$-complete problem as the code length of the QLDPC code tends to infinite\cite{hsieh2011np,iyer2015hardness}. Fortunately, BP provides us a method with linear time complexity to efficiently (when there are short cycles in the Tanner graph, also approximately) compute it\cite{wiberg1996codes,mceliece1998turbo,kschischang2001factor}. For a Pauli error $E=(E_1,\cdots,E_n)$ acting on $n$ qubits $(E_i\in\{I,X,Y,Z\},1\le i\le n)$ with the corresponding syndrome $\textbf{s}$, BP gives out the most likely error on $i$th qubit $\hat{E}_i$ namely
\begin{equation}
	\hat{E}_i=\mathop{\arg\max}\limits_{E_i}\ P(E_i|\textbf{s})
\end{equation}
where $P(E_i|\textbf{s})$ is the marginalized probability of the error $E_i$ acting on the $i$th qubit. BP computes this probability by iteratively passing messages on the edge between the check nodes and the variable nodes of the Tanner graph. Within each iteration, message passing algorithm is used to accelerate the calculation of marginal probability. The calculation results of each iteration will be taken as the input of next iteration to improve the accuracy of the estimation of final marginal probability. There are multiple specific algorithms which can be used to update the messages on the edges in message passing algorithm, including sum-product algorithm \cite{mackay1999good}, min-sum algorithm \cite{fossorier1999reduced} and modified versions of min-sum algorithm\cite{emran2014simplified,chen2002near,chen2002density}. In our work, we have tried sum-product algorithm and min-sum algorithm.

Next, we briefly show that how binary log-domain BP works and focus on binary BP in the rest of paper. Readers can see \cite{babar2015fifteen,kuo2022exploiting,kuo2020refined,kuo2020refined1,poulin2008iterative} for more detail of quaternary BP.

Given a $m\times n$ binary parity-check matrix $H$, syndrome $\textbf{s}=(s_1,\cdots,s_m)$, the error rate on each variable nodes $p=(p_1,\cdots,p_n)$, and the maximum number of iterations $iter_{max}$, the decoding procedure of BP can be divided into the following steps:

\noindent\textbf{Step 1: Initialization}

For each $1\le j\le n$, compute
\begin{equation}
	\gamma_j=\ln \frac{p(E_j=0)}{p(E_j=1)} = \ln\frac{1-p_j}{p_j}
\end{equation}
which is the priori information provided to the decoder. The variable-to-check  message $m_{v_j\Rightarrow c_i}$ sent by variable node $v_j$ to its neighboring check node $c_i\in \mathcal{N}(v_j)$ is initialized as
\begin{equation}
	m_{v_j\Rightarrow c_i}=\gamma_j
\end{equation}

\noindent\textbf{Step 2: Horizontal update}

The check-to-variable message $m_{c_i\Rightarrow v_j}$ sent by check node $c_i$ to its neighboring variable node $v_j\in \mathcal{N}(c_i)$ is updated as a function of the variable-to-check  messages $m_{v_{j^{\prime}}\Rightarrow c_i }$ previously received by the check node $c_i$ from its neighboring variable nodes $v_{j^{\prime}}\in \mathcal{N}(c_j )/v_j$, as shown in Fig. \ref{fig:message update}(a).

\noindent\textbf{Step 3: Vertical update}

The variable-to-check  message $m_{v_j\Rightarrow c_i}$ sent by variable node $v_j$ to its neighboring check node $c_i\in \mathcal{N}(v_j)$ is updated as a function of the check-to-variable messages $m_{c_{i^{\prime}}\Rightarrow v_j}$ previously received by the variable node $v_j$ from its neighboring check nodes $c_{i^{\prime}}\in \mathcal{N}(v_j )/c_i$, as shown in Fig. \ref{fig:message update}(b).

\noindent\textbf{Step 4: Hard decision}

Finally, for each $1\le j\le n$, a posteriori information $\tilde{\gamma}_j$ is computed as a function of the priori information $\gamma_j$ and the check-to-variable message $m_{c_i\Rightarrow v_j}$ received by the variable node $v_j$ from its all neighboring check nodes $c_i\in \mathcal{N}(v_j)$, namely,
\begin{equation}
	\tilde{\gamma}_j=\gamma_j + \sum_{c_j\in \mathcal{N}(v_j)} m_{c_i\Rightarrow v_j}
\end{equation}
as shown in Fig. \ref{fig:message update}(c). Let $\hat{E}_i = 1$ if $\tilde{\gamma}_j<0$ and 0 otherwise such that we obtain an estimated error vector $\hat{E}=(\hat{E}_1,\cdots,\hat{E}_n)$. If the syndrome $\hat{\textbf{s}}$ of $\hat{E}$ is consistent with the input syndrome $\textbf{s}$, then the procedure halts and return “converge”. Otherwise, the algorithm repeats from Horizontal update step until $\hat{\textbf{s}}$ is consistent with $\textbf{s}$ or the maximum number of iterations $iter_{max}$ is reached.

\begin{figure*}[htbp]
	\centering
	\includegraphics[width=0.7\textwidth]{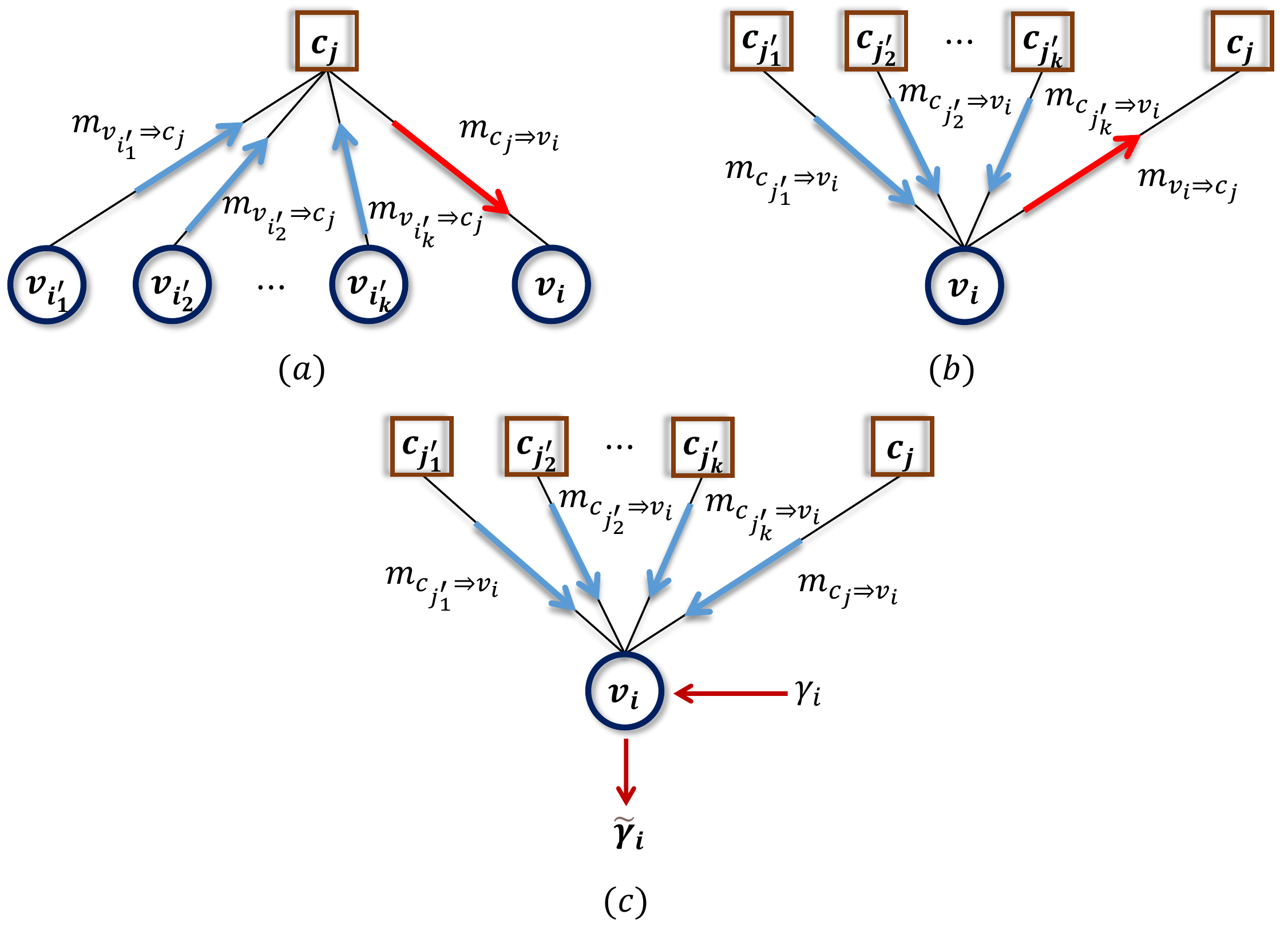}
	\caption{\textbf{Message update on Tanner graph.} (a) Horizontal update (b) Vertical update (c) Hard decision}
	\label{fig:message update}
\end{figure*}

\subsubsection {Ordered statistics decoding}
\label{2.3.3}

Ordered statistics decoding (OSD) is a post-processing algorithm which can be used when BP fails. OSD was originally proposed to improve the waterfall performance in classical LDPC codes\cite{fossorier1995soft}. For a classical LDPC code with parity-check matrix $H^c$, the decoding problem is to find out an error $\hat{e}$ such that $H^c\hat{e}=\textbf{s}$, where $\textbf{s}$ is a syndrome. Notice that $H^c$ is row full rank and the number of rows is smaller than the number of columns. Thus, there is no unique solution for $\hat{e}$. However, it is possible to find out a set of linearly independent columns $J$ of $H^c$ which constitutes a full rank matrix $H_J^c$ such that $\hat{e}={H_J^c}^{-1} \textbf{s}$. Different choice of $J$ leads to different solution $\hat{e}$ and expect to find out the low-weight solution. When BP fails to converge, the last round of posteriori information $\tilde{\pmb{\gamma}}$ still tells us which bits may be flipped with high probability, which is the key idea of OSD. It has been proved that OSD has good performance for some QLDPC codes \cite{roffe2020decoding,Panteleev2021degeneratequantum,kung2023belief}.

Next, we only briefly describe the order$-0$ OSD, which is used in this paper. The performance of BP combined with order$-0$ OSD is good enough in many cases and it is the simplest OSD decoder. Readers can see \cite{roffe2020decoding,Panteleev2021degeneratequantum,kung2023belief} for more detail of high order OSD.

Given a $m\times n$ binary parity-check matrix $H$, syndrome $\textbf{s}=(s_1,\cdots,s_m)$ and the last round of posteriori information $\tilde{\pmb{\gamma}}=(\tilde{\gamma}_1,\cdots,\tilde{\gamma}_n )$, the procedure of order$-0$ OSD can be divided into the following steps:

\noindent\textbf{Step 1: Ranking bit indices.}

The last round of posteriori information $\tilde{\pmb{\gamma}}=(\tilde{\gamma}_1,\cdots,\tilde{\gamma}_n )$ represents the reliability of each bit. Rank them in ascending order according to $\tilde{\pmb{\gamma}}$ and we obtain a ranked list of bit indices $L_R$ (notice that the smaller $\tilde{\gamma}_j$ is, the higher the probability of $j$th bit being flipped is).

\noindent\textbf{Step 2: Rearranging $H$.}

Rearrange the columns of $H$ according to $L_R$ and we obtain an ordered matrix $H_{L_R}$.

\noindent\textbf{Step 3: Finding out a full rank matrix $H_{L_{R}[m]}$.}

Select the first $m$ linearly independent columns $L_R[m]$ of $H_{L_R}$ to constitute a full rank matrix $H_{L_{R}[m]}$.

\noindent\textbf{Step 4: Solving the nonhomogeneous linear equation.}

Solve the nonhomogeneous linear equation $H_{L_{R}[m]} \hat{\textbf{e}}=\textbf{s}$ and we obtain a unique solution $\hat{\textbf{e}}_{L_R[m]}= {H_{L_{R}[m]}}^{-1}\textbf{s}$.

\noindent\textbf{Step 5: Outputting decoding result.}

Let $\hat{\textbf{e}}_{L_R[r]}=\textbf{0}$ be an all-zero vector with size of $(n-m)$, where $L_R[r]$ represents the remaining set of $L_R$ excluding the set $L_R[m]$. Thus, the order$-0$ OSD decoding result is $\hat{\textbf{e}}_{L_R}=(\hat{\textbf{e}}_{L_R[m]},\hat{\textbf{e}}_{L_R[r]})=(\hat{\textbf{e}}_{L_R[m]},\textbf{0})$. Last, map $\hat{\textbf{e}}_{L_R}$ to the original bit ordering and obtain decoding result $\hat{\textbf{e}}_{L_R}\rightarrow \hat{\textbf{e}}$ such that $H\hat{\textbf{e}}=\textbf{s}$.

\section {Belief propagation decoding based on decoupling representation of Pauli operators}
\label{3}

In this section, we first introduce the idea of decoupling representation of Pauli operators, based on which we give out the decoupled parity-check matrix of quantum stabilizer codes and propose PDBP and FDBP for QLDPC codes.

It must be emphasized that in this paper, we only consider the general independent single-qubit Pauli noise channel in which each qubit independently suffers a Pauli error $I$, $X$, $Y$, or $Z$ with error probability $1-p$, $p_X$, $p_Y$, $p_Z$, respectively, where $p=p_X+p_Y+p_Z$ is physical qubit error rate. If $p_X=p_Z=p_Y$, the channel is a depolarizing noise channel. If $p_X\neq0$, $p_Y=0$, $p_Z=0$, the channel is a pure $X$ noise channel, and the similar for pure $Z$ noise channel and pure $Y$ noise channel.

\subsection {Decoupling representation of Pauli operators and decoupled parity-check matrix}
\label{3.1}
In symplectic representation, any Pauli error $E$ acting on $n$ qubits can be represented as a binary vector $\textbf{e}=(\textbf{e}_x\mid \textbf{e}_z )$, where $\textbf{e}_x$ and $\textbf{e}_z$ are both binary vectors with size of $n$. Actually, $\textbf{e}_x$ consists of two error constituents — Pauli $X$ errors and Pauli $Y$ errors. If we denote these two error constituents by vectors $\textbf{e}_x^{\prime}$ and $\textbf{e}_y^{\prime}$, respectively, then
\begin{equation}
	\label{ex}
	\textbf{e}_x=\textbf{e}_x^{\prime}+\textbf{e}_y^{\prime}
\end{equation}
Similarly, $\textbf{e}_z$ consists of two error constituents – Pauli $Z$ errors and Pauli $Y$ errors, which can be denoted by vectors $\textbf{e}_z^{\prime}$ and $\textbf{e}_y^{\prime}$, respectively. Namely,
\begin{equation}
	\label{ez}
	\textbf{e}_z=\textbf{e}_z^{\prime}+\textbf{e}_y^{\prime}
\end{equation}
For instance, the symplectic representation of Pauli error $X_1 Y_2 Z_3$ is 
\begin{equation}
	\textbf{e}=(\textbf{e}_x\mid \textbf{e}_z )=(1\ 1\ 0\mid 0\ 1\ 1)
\end{equation}
where $\textbf{e}_x=(1\ 1\ 0)$ and $\textbf{e}_z=(0\ 1\ 1)$. According to Eq. (\ref{ex}) and (\ref{ez}) $\textbf{e}_x=(1\ 1\ 0)=\textbf{e}_x^{\prime}+\textbf{e}_y^{\prime}=\\
(1\ 0\ 0)+(0\ 1\ 0)$ and $\textbf{e}_z=(0\ 1\ 1)=\textbf{e}_z^{\prime}+\textbf{e}_y^{\prime}=(0\ 0\ 1)+(0\ 1\ 0)$, where $\textbf{e}_x^{\prime}=(1\ 0\ 0)$ represents the error $X_1 I_2 I_3$, $\textbf{e}_y^{\prime}=(0\ 1\ 0)$ represents the error $I_1 Y_2 I_3$ and $\textbf{e}_z^{\prime}=(0\ 0\ 1)$ represents the error $I_1 I_2 Z_3$.

In symplectic representation, the above analysis shows that Pauli $Y$ errors introduce the correlation between vectors $\textbf{e}_x$ and $\textbf{e}_z$, which should be taken into account in decoding. However, most decoding algorithms of QECCs decode $\textbf{e}_x$ and $\textbf{e}_z$ separately and omit the correlation introduced by Pauli $Y$ errors. The reason is that symplectic representation does not explicitly provide us with a vector to represent Pauli $Y$ errors, due to which omitting the correlation seems to be natural. The idea of decoupling representation comes from it — we try to propose a representation method in which all Pauli errors will be naturally taken into account.

The idea of decoupling representation is using a binary vector $\textbf{e}$ of size $3n$, which consists of three different parts of vectors $\textbf{e}_x^{\prime}$, $\textbf{e}_y^{\prime}$ and $\textbf{e}_z^{\prime}$, to represent Pauli errors $E$ acting on $n$ qubits. $\textbf{e}_x^{\prime}$, $\textbf{e}_y^{\prime}$ and $\textbf{e}_z^{\prime}$ represent Pauli $X$, $Y$ and $Z$ constituents of $E$, respectively, so that there is no correlation between the representation of these three Pauli errors constituent in mathematical. The decoupling representation of Pauli errors is defined as follow.

\begin{definition}[\textbf{Decoupling representation of Pauli operators}]
	\label{Decoupling representation of Pauli operators}
	The representation which represents Pauli operators by the following mapping is called decoupling representation.
	\begin{equation}
		\begin{aligned}
			&I\rightarrow(0,0,0),\ X\rightarrow(1,0,0),\\
			&Z\rightarrow(0,1,0),\ Y\rightarrow(0,0,1)
		\end{aligned}	
	\end{equation}
	For a Pauli error $E$ acting on $n$ qubits, by the above mapping, its decoupling representation is a binary vector $\textbf{e}$ with size of $3n$, namely,
	\begin{equation}
		\textbf{e}=(\textbf{e}_x^{\prime}\mid\textbf{e}_z^{\prime}\mid\textbf{e}_y^{\prime})
	\end{equation}
	where $\textbf{e}_x^{\prime}$, $\textbf{e}_z^{\prime}$ and $\textbf{e}_y^{\prime}$ are all binary vectors with size of $n$. For instance, the decoupling representation of $X_1 Y_2 Z_3$ is $\textbf{e}=(\textbf{e}_x^{\prime}\mid\textbf{e}_z^{\prime}\mid\textbf{e}_y^{\prime})=(1\ 0\ 0\mid0\ 0\ 1\mid0\ 1\ 0 )$.
\end{definition}

\begin{definition}[\textbf{Decoupled parity-check matrix}]
	\label{Decoupled parity-check matrix}
	Given an $[[n,k]]$ QLDPC code $C$ and the symplectic representation of its stabilizer generators $H=(H_x\mid H_z)$, the decoupled parity-check matrix of $C$ is
	\begin{equation}
		H_d=(H_z\mid H_x \mid(H_x\oplus H_z))
	\end{equation}
	whose dimension is $(n-k)\times 3n$ and where $\oplus$ denotes addition modulo 2. We use notation $H_d(j,i)$ to denote the element in the $j$th row and the $i$th column.
\end{definition}

\begin{lemma}[\textbf{The sum of elements of decoupled parity-check matrix}]
	\label{corollary 1}
	For the decoupled parity-check matrix $H_d=(H_z\mid H_x\mid(H_x\oplus H_z))$ of a $[[n,k]]$ QLDPC code, the number of $1$ among $i$th, $(i+n)$th and $(i+2n)$th column elements of each row is either $0$ or $2$, namely,
	\begin{equation}
		\begin{aligned}
			&H_d(j,i)+H_d(j,i+n)+H_d(j,i+2n)\\
			&=\left\{ 
			\begin{array}{lc}
				0, & if\ the\ jth\ stabilizer\ generator\ acts\\
				&trivially\ on\ the\ ith\ qubit \\
				2,& otherwise\\
			\end{array}
			\right.
		\end{aligned}
	\end{equation}
	for all $1\le j\le n-k$, $1\le i\le n$.
\end{lemma}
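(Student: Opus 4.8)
The plan is to reduce the claim to a pointwise statement about the three bits attached to qubit $i$ in stabilizer generator $j$, and then simply enumerate the four possibilities for the Pauli operator that $S_j$ applies to qubit $i$. Fix $1\le j\le n-k$ and $1\le i\le n$, and let $a = H_z(j,i)$, $b = H_x(j,i)$, so that the three relevant entries of $H_d$ are $H_d(j,i) = a$, $H_d(j,i+n) = b$, and $H_d(j,i+2n) = a\oplus b$. The quantity to analyze is therefore $a + b + (a\oplus b)$ as an ordinary integer sum (not mod $2$).

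First I would recall, from the symplectic representation in Eq.~(3) and Eq.~(4), that the pair $(b,a) = (H_x(j,i), H_z(j,i))$ encodes exactly the single-qubit Pauli factor of $S_j$ on qubit $i$: $(0,0)$ for $I$, $(1,0)$ for $X$, $(0,1)$ for $Z$, and $(1,1)$ for $Y$. Next I would tabulate $a+b+(a\oplus b)$ in each of these four cases: for $I$ we get $0+0+0 = 0$; for $X$ we get $0+1+1 = 2$; for $Z$ we get $1+0+1 = 2$; for $Y$ we get $1+1+0 = 2$. Thus the sum is $0$ precisely when $(a,b)=(0,0)$, i.e.\ when $S_j$ acts trivially (as $I$) on qubit $i$, and it is $2$ in every other case, which is exactly the asserted piecewise formula. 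Since $j$ and $i$ were arbitrary, this establishes the lemma for all rows and all column-triples.

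If one prefers an algebraic phrasing over a case check, I would instead observe the integer identity $a + b + (a\oplus b) = 2(a\vee b)$ for $a,b\in\{0,1\}$ (equivalently $a + b + (a\oplus b) = 2(a+b) - 2ab = 2(a\vee b)$), where $a\vee b$ is the Boolean OR; then $a\vee b = 0$ iff $a=b=0$ iff the generator acts trivially on the qubit, and $a\vee b = 1$ otherwise, giving the values $0$ and $2$ respectively. Either route is completely elementary; there is essentially no obstacle here. The only thing to be careful about is the bookkeeping convention: the sum in the statement is a sum of integers $0/1$, not a sum in $GF(2)$, so one must make explicit (as the enumeration does) that $H_d(j,i+2n) = H_x(j,i)\oplus H_z(j,i)$ is being read back as the integer $0$ or $1$ before adding. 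Once that is flagged, the four-line table is the whole proof.
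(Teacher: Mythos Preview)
Your proof is correct and takes essentially the same approach as the paper's: both fix $j$ and $i$ and exhaust the four possible single-qubit Pauli factors of $S_j$ on qubit $i$, reading off the three bits in each case. Your version is slightly tidier in that you name the two symplectic bits $a,b$ up front and also offer the closed-form identity $a+b+(a\oplus b)=2(a\vee b)$, but this is the same case analysis as in the paper.
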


\begin{proof}
	1. If the $j$th stabilizer generator acts trivially on the $i$th qubit, then
	\begin{equation}
		H_d (j,i)=H_d (j,i+n)=H_d (j,i+2n)=0
	\end{equation}
	2. If the $j$th stabilizer generator acts nontrivially on the $i$th qubit, there are $3$ cases in total.
	
	A. If a Pauli $X$ error acts on the $i$th qubit, then
	\begin{equation}
		H_d (j,i+n)=H_d (j,i+2n)=1,\ H_d (j,i)=0
	\end{equation}
	
	B. If a Pauli $Y$ error acts on the $i$th qubit, then
	\begin{equation}
		H_d (j,i)=H_d (j,i+n)=1,\ H_d (j,i+2n)=0
	\end{equation}
	
	C. If a Pauli $Z$ error acts on the $i$th qubit, then
	\begin{equation}
		H_d (j,i)=H_d (j,i+2n)=1,\ H_d (j,i+n)=0
	\end{equation}
	No matter which case, we must have
	\begin{equation}
		H_d (j,i)+H_d (j,i+n)+H_d (j,i+2n)=2
	\end{equation}
	which completes the proof.
\end{proof}

\begin{lemma}[\textbf{The error syndrome expression in the decoupling representation}]
	Given the decoupling representation $\textbf{e}=(\textbf{e}_x^{\prime}\mid\textbf{e}_z^{\prime}\mid\textbf{e}_y^{\prime})$ of a Pauli error $E$ and a parity-check matrix $H_d=(H_z\mid H_x\mid(H_x\oplus H_z))$ of a QLDPC code, the error syndrome $\textbf{s}$ is
	\begin{equation}
		\textbf{s}=(H_d\cdot \textbf{e})\ mod\ 2
	\end{equation}
\end{lemma}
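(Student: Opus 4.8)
The plan is to reduce the claim to the already-established syndrome formula in symplectic representation, Eq.~\eqref{syndrome}, by expressing each block of $H_d\cdot\textbf{e}$ in terms of the symplectic data. Writing out the matrix-vector product blockwise, we have
\begin{equation}
	H_d\cdot\textbf{e} = H_z\cdot\textbf{e}_x^{\prime} + H_x\cdot\textbf{e}_z^{\prime} + (H_x\oplus H_z)\cdot\textbf{e}_y^{\prime} \pmod 2 .
\end{equation}
First I would expand the last term as $(H_x + H_z)\cdot\textbf{e}_y^{\prime} = H_x\cdot\textbf{e}_y^{\prime} + H_z\cdot\textbf{e}_y^{\prime}$ modulo $2$, and then regroup the four resulting terms as $H_z\cdot(\textbf{e}_x^{\prime}+\textbf{e}_y^{\prime}) + H_x\cdot(\textbf{e}_z^{\prime}+\textbf{e}_y^{\prime})$.

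The next step is to invoke the relations between the two representations that are already spelled out in Sect.~\ref{3.1}, namely Eq.~\eqref{ex} and Eq.~\eqref{ez}, which say $\textbf{e}_x = \textbf{e}_x^{\prime}+\textbf{e}_y^{\prime}$ and $\textbf{e}_z = \textbf{e}_z^{\prime}+\textbf{e}_y^{\prime}$, where $(\textbf{e}_x\mid\textbf{e}_z)$ is the symplectic representation of the same Pauli error $E$. Substituting these in gives $H_d\cdot\textbf{e} \equiv H_z\cdot\textbf{e}_x + H_x\cdot\textbf{e}_z \pmod 2$, which is exactly the right-hand side of Eq.~\eqref{syndrome}. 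Since that equation is the definition/known expression of the syndrome $\textbf{s}$ of $E$, we conclude $\textbf{s} = (H_d\cdot\textbf{e})\bmod 2$, as claimed.

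I expect the argument to be essentially a one-line algebraic identity once the two dictionaries are lined up; there is no real obstacle. The only point requiring a word of care is making explicit that the $\textbf{e}_x^{\prime},\textbf{e}_y^{\prime},\textbf{e}_z^{\prime}$ appearing in the decoupling vector $\textbf{e}$ are the \emph{same} constituent vectors that appear in the decomposition of the symplectic vector in Eqs.~\eqref{ex}--\eqref{ez} — this is immediate from Definition~\ref{Decoupling representation of Pauli operators}, since both representations read off the $X$-, $Y$-, and $Z$-supports of $E$ qubit by qubit. One could optionally double-check consistency on the single-qubit cases ($X$, $Y$, $Z$) using Lemma~\ref{corollary 1}, but that is not needed for the proof.
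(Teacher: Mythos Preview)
Your proposal is correct and follows essentially the same approach as the paper: expand $H_d\cdot\textbf{e}$ blockwise, distribute $(H_x\oplus H_z)\cdot\textbf{e}_y^{\prime}$, regroup into $H_z\cdot(\textbf{e}_x^{\prime}+\textbf{e}_y^{\prime})+H_x\cdot(\textbf{e}_z^{\prime}+\textbf{e}_y^{\prime})$, and invoke Eqs.~\eqref{ex}--\eqref{ez} to recover Eq.~\eqref{syndrome}. The paper's proof is precisely this chain of equalities with no additional ingredients.
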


\begin{proof}
	
	\begin{equation}
		\begin{aligned}
			\textbf{s}&=(H_d\cdot \textbf{e})\ mod\ 2 \\
			&=((H_z\mid H_x\mid(H_x\oplus H_z ))\cdot(\textbf{e}_x^{\prime}\mid\textbf{e}_z^{\prime}\mid\textbf{e}_y^{\prime}))\ mod\ 2\\
			&=(H_z\cdot \textbf{e}_x^{\prime}+H_x\cdot \textbf{e}_z^{\prime}+(H_x\oplus H_z )\cdot \textbf{e}_y^{\prime})\ mod\ 2\\
			&=(H_z\cdot(\textbf{e}_x^{\prime}+\textbf{e}_y^{\prime})+H_x\cdot(\textbf{e}_z^{\prime}+\textbf{e}_y^{\prime}))\ mod\ 2\\
			&=(H_z\cdot \textbf{e}_x+H_x\cdot \textbf{e}_z )\ mod\ 2\\
			&=((H_z\mid H_x )\cdot(\textbf{e}_x\mid\textbf{e}_z ))\ mod\ 2
		\end{aligned}
	\end{equation}
	which is consistent with Eq. (\ref{syndrome}) and completes the proof.
\end{proof}

\subsection {Partially decoupled binary BP algorithm}
\label{3.2}
In this section, we will introduce PDBP algorithm. Different from SBP whose message passing is performed on the Tanner graph with $2n$ variable nodes corresponding to the symplectic parity-check matrix, PDBP perform message passing on the Tanner graph with $3n$ variable nodes corresponding to the decoupled parity-check matrix, while the rules of message update and hard decision have not been modified yet. As a contrast, the rules of message update and hard decision are modified to be fully compatible with decoupling representation in FDBP, which will be introduced in Sect. \ref{3.3}.

\textbf{Algorithm} \ref{Min-sum-based PDBP} shows min-sum-based PDBP algorithm. \textbf{Algorithm} \ref{sum-product-based PDBP} shows sum-product-based PDBP algorithm. In PDBP, before decoding, we should assign the error probability to each bit of the decoupling representation $\textbf{e}=(\textbf{e}_x^{\prime}\mid\textbf{e}_z^{\prime}\mid\textbf{e}_y^{\prime})=(e_1^{\prime},\cdots,e_n^{\prime}\mid e_{n+1}^{\prime},\cdots,e_{2n}^{\prime}\mid e_{2n+1}^{\prime},\\
\cdots,e_{3n}^{\prime})$ of a Pauli error $E$ acting on $n$ qubits. Since we only consider independent single-qubit Pauli noise channel whose error probability distribution is $(1-p_X-p_Z-p_Y,\ p_X,\ p_Y,\ p_Z )$, the bits of $\textbf{e}_x^{\prime}$, $\textbf{e}_z^{\prime}$ and $\textbf{e}_y^{\prime}$ are independently flipped with probability $p_X$, $p_Z$, and $p_Y$, respectively. Thus, the error rate vector on each bit of $\textbf{e}$ is $\textbf{\textit{p}}=(p_1,\cdots,p_n \mid p_{n+1},\cdots,p_{2n}\mid p_{2n+1},\cdots,p_{3n})$, where $p_1=\cdots=p_n=p_X$, $p_{n+1}=\cdots=p_{2n}=p_Z$, and $p_{2n+1}=\cdots=p_{3n}=p_Y$. The rules of message update and hard decision are the same as those of min-sum-based and sum-product-based BP in \cite{mackay1999good,fossorier1999reduced}, to be brief, we do not describe them in detail, and readers can see \cite{mackay1999good,fossorier1999reduced} for more details. 

\begin{breakablealgorithm}
	\caption{Min-sum-based PDBP}
	\label{Min-sum-based PDBP}
	%\LinesNumbered
	\begin{algorithmic}[1]
		\textbf{Input:} decoupled parity-check matrix $H_d=(H_z |H_x |(H_x\oplus H_z ))$ with dimension of $m\times 3n$,\\
		error rate vector $\textbf{\textit{p}}=(p_1,\cdots,p_n \mid p_{n+1},\cdots,p_{2n}\mid p_{2n+1},\cdots,p_{3n})$ with size of $3n$, where $p_1=\cdots=p_n=p_X$, $p_{n+1}=\cdots=p_{2n}=p_Z$, and $p_{2n+1}=\cdots=p_{3n}=p_Y$,\\
		the maximum iteration number $iter_{max}$,\\
		syndrome vector $\textbf{s}=(s_1,\cdots,s_m )$ with size of $m$.
		
		\textbf{Output:} binary error vector $\hat{e}$ with size of $3n$.
		
		\textbf{Initialization:}
		
		\textbf{for} $i\leftarrow1$ to $3n$, let
		\begin{equation}
			\gamma_i = \ln \frac{1-p_i}{p_i}
		\end{equation}
		
		\textbf{for} each $v_i\in V$, $c_j\in C$ such that $H_d (j,i)=1$, let
		\begin{equation}
			m_{v_i\Rightarrow c_j}= \gamma_i 
		\end{equation}
		
		\textbf{Horizontal update:}
		
		\textbf{for} $j\leftarrow1$ to $m$, do
		\begin{equation}
			\begin{aligned}
				&m_{c_j\Rightarrow v_i}= (-1)^{s_j}\left(\prod_{v_{i^{\prime}}\in \mathcal{N}(c_j)\slash v_i}sign(m_{v_{i^{\prime}}\Rightarrow c_j})\right.\\
				&\left. \times \min_{v_{i^{\prime}}\in \mathcal{N}(c_j)\slash v_i}(\lvert m_{v_{i^{\prime}}\Rightarrow c_j} \rvert)\right)
			\end{aligned}
		\end{equation}
		
		\textbf{Vertical update:}
		
		\textbf{for} $i\leftarrow1$ to $3n$, do
		\begin{equation}
			m_{v_{i}\Rightarrow c_j}=\gamma_i + \sum_{c_{j^{\prime}}\in \mathcal{N}(v_i)/c_j }m_{c_{j^{\prime}}\Rightarrow v_i}
		\end{equation}
		
		\textbf{Hard decision:}
		
		\textbf{for} $i\leftarrow1$ to $3n$, do
		\begin{equation}
			\tilde{\gamma}_i=\gamma_i + \sum_{c_{j^{\prime}}\in \mathcal{N}(v_i)}m_{c_{j^{\prime}}\Rightarrow v_i}
		\end{equation}
		
		\textbf{if} $\tilde{\gamma}_i < 0$, let
		\begin{equation}
			\hat{e_i}=1
		\end{equation}
		
		\textbf{else}, let
		\begin{equation}
			\hat{e_i}=0
		\end{equation}
		
		if $H_d\cdot\hat{e}=\textbf{s}$, return “\textbf{converge}” and halt.
		
		else if the maximum iteration number $iter_{max}$ is reached, return “\textbf{fail}” and halt.
		
		else repeat from \textbf{Horizontal update}.
	\end{algorithmic}
\end{breakablealgorithm}

\begin{breakablealgorithm}
	\caption{Sum-product-based PDBP}
	\label{sum-product-based PDBP}
	%\LinesNumbered
	\begin{algorithmic}
		\textbf{Input:} decoupled parity-check matrix $H_d=(H_z |H_x |(H_x\oplus H_z ))$ with dimension of $m\times 3n$,\\
		error rate vector $\textbf{\textit{p}}=(p_1,\cdots,p_n \mid p_{n+1},\cdots,p_{2n}\mid p_{2n+1},\cdots,p_{3n})$ with size of $3n$, where $p_1=\cdots=p_n=p_X$, $p_{n+1}=\cdots=p_{2n}=p_Z$, and $p_{2n+1}=\cdots=p_{3n}=p_Y$,\\
		the maximum iteration number $iter_{max}$,\\
		syndrome vector $\textbf{s}=(s_1,\cdots,s_m )$ with size of $m$.
		
		\textbf{Output:} binary error vector $\hat{e}$ with size of $3n$.
		
		\textbf{Initialization:}
		
		\textbf{for} $i\leftarrow1$ to $3n$, let
		\begin{equation}
			\gamma_i = \ln \frac{1-p_i}{p_i}
		\end{equation}
		
		\textbf{for} each $v_i\in V$, $c_j\in C$ such that $H_d (j,i)=1$, let
		\begin{equation}
			m_{v_i\Rightarrow c_j}= \gamma_i 
		\end{equation}
		
		\textbf{Horizontal update:}
		
		\textbf{for} $j\leftarrow1$ to $m$, do
		\begin{equation}
			w = \ln \frac{1+\prod_{v_{i^{\prime}}\in \mathcal{N}(c_j)\slash v_i}\tanh \left(\frac{m_{v_{i^{\prime}}\Rightarrow c_j}}{2}\right)}{1-\prod_{v_{i^{\prime}}\in \mathcal{N}(c_j)\slash v_i}\tanh \left(\frac{m_{v_{i^{\prime}}\Rightarrow c_j}}{2}\right)}
		\end{equation}
		
		\begin{equation}
			m_{c_j\Rightarrow v_i}= (-1)^{s_j}w
		\end{equation}
		
		\textbf{Vertical update:}
		
		\textbf{for} $i\leftarrow1$ to $3n$, do
		\begin{equation}
			m_{v_{i}\Rightarrow c_j}=\gamma_i + \sum_{c_{j^{\prime}}\in \mathcal{N}(v_i)/c_j }m_{c_{j^{\prime}}\Rightarrow v_i}
		\end{equation}
		
		\textbf{Hard decision:}
		
		\textbf{for} $i\leftarrow1$ to $3n$, do
		\begin{equation}
			\tilde{\gamma}_i=\gamma_i + \sum_{c_{j^{\prime}}\in \mathcal{N}(v_i)}m_{c_{j^{\prime}}\Rightarrow v_i}
		\end{equation}
		
		\textbf{if} $\tilde{\gamma}_i < 0$, let
		\begin{equation}
			\hat{e_i}=1
		\end{equation}
		
		\textbf{else}, let
		\begin{equation}
			\hat{e_i}=0
		\end{equation}
		
		if $H_d\cdot\hat{e}=\textbf{s}$, return “\textbf{converge}” and halt.
		
		else if the maximum iteration number $iter_{max}$ is reached, return “\textbf{fail}” and halt.
		
		else repeat from \textbf{Horizontal update}.
	\end{algorithmic}
\end{breakablealgorithm}

The simulation results of min-sum-based and sum-product-based PDBP on XZZX surface code and planar surface code are shown in Sect. \ref{4}.

\subsection {Fully decoupled binary BP algorithm}
\label{3.3}
Careful readers must have noticed that, according to Corollary \ref{corollary 1}, in decoupling representation $\textbf{e}=(\textbf{e}_x^{\prime}\mid\textbf{e}_z^{\prime}\mid\textbf{e}_y^{\prime})$ of a Pauli error acting on $n$ qubits, the total number of $1$s of the $i$th $(i\leq n)$, $(i+n)$th and $(i+2n)$th elements of $\textbf{e}$ is no more than one, namely,
\begin{equation}
	\label{restraint condition}
	e_i+e_{i+n}+e_{i+2n}\leq1
\end{equation}
which should be taken into account in decoding. Based on PDBP algorithm, we take this into account and obtain FDBP algorithm. In FDBP, the restraint condition Eq. (\ref{restraint condition}) is reflected in message update and hard decision processes. Before introducing FDBP algorithm, we need an extra notation.

\begin{itemize}
	\item
	\textbf{For the $i$th column of a decoupled parity-check matrix, we use $i_s$ to denote the column orders corresponding to the same qubit pointed by $i$, namely, $i_s\in \{(i+n)\ mod
		\\ 3n,\ (i+2n)\ mod\ 3n\}$.}
\end{itemize}

\textbf{Algorithm} \ref{Min-sum-based FDBP} and \textbf{Algorithm} \ref{Sum-product-based FDBP} show the min-sum-based and sum-product-based FDBP, respectively. Their mathematical derivations are given in Appendix \ref{Appendix B} and Appendix \ref{Appendix A}, respectively. The update rule of the check-to-variable message $m_{c_j\Rightarrow v_i}$ of FDBP is quite different from that of PDBP. The difference comes from the restraint condition that if $\hat{e}_i=1$, $\hat{e}_{i_s}=0$. Readers can see Appendix \ref{Appendix A} and Appendix \ref{Appendix B} for more details of its mathematical derivation.

\begin{breakablealgorithm}
	\caption{Min-sum-based FDBP}
	\label{Min-sum-based FDBP}
	%\LinesNumbered
	\begin{algorithmic}
		\textbf{Input:} decoupled parity-check matrix $H_d=(H_z |H_x |(H_x\oplus H_z ))$ with dimension of $m\times 3n$,\\
		error rate vector $\textbf{\textit{p}}=(p_1,\cdots,p_n \mid p_{n+1},\cdots,p_{2n}\mid p_{2n+1},\cdots,p_{3n})$ with size of $3n$, where $p_1=\cdots=p_n=p_X$, $p_{n+1}=\cdots=p_{2n}=p_Z$, and $p_{2n+1}=\cdots=p_{3n}=p_Y$,\\
		the maximum iteration number $iter_{max}$,\\
		syndrome vector $\textbf{s}=(s_1,\cdots,s_m )$ with size of $m$.
		
		\textbf{Output:} binary error vector $\hat{e}$ with size of $3n$.
		
		\textbf{Initialization:}
		
		\textbf{for} $i\leftarrow1$ to $3n$, let
		\begin{equation}
			\gamma_i = \ln \frac{1-p_i}{p_i}
		\end{equation}
		
		\textbf{for} each $v_i\in V$, $c_j\in C$ such that $H_d (j,i)=1$, let
		\begin{equation}
			m_{v_i\Rightarrow c_j}= \gamma_i 
		\end{equation}
		
		\textbf{Horizontal update:}
		
		\textbf{for} $j\leftarrow1$ to $m$, do
		
		\textbf{if} $s_j=0$
		\begin{equation}
			\begin{aligned}
				&m_{c_j\Rightarrow v_i}= \left(\prod_{v_{i^{\prime}}\in \mathcal{N}(c_j)\slash v_i}sign(m_{v_{i^{\prime}}\Rightarrow c_j})\right.\\
				&\times \left.\min_{v_{i^{\prime}}\in \mathcal{N}(c_j)\slash v_i}(\lvert m_{v_{i^{\prime}}\Rightarrow c_j} \rvert)\right)\\
				&+\ln \frac{1-\prod_{v_{i^{\prime}}\in \mathcal{N}(c_j)\slash v_i}\tanh \left(\frac{m_{v_{i^{\prime}}\Rightarrow c_j}}{2}\right)}{1-\prod_{v_{i^{\prime}}\in \mathcal{N}(c_j)\slash \{v_i,v_{i_s}\}} \tanh \left(\frac{m_{v_{i^{\prime}}\Rightarrow c_j}}{2}\right)}
			\end{aligned}
		\end{equation}
		
		\textbf{if} $s_j=1$
		\begin{equation}
			\begin{aligned}
				&m_{c_j\Rightarrow v_i}= \left(-1\prod_{v_{i^{\prime}}\in \mathcal{N}(c_j)\slash \{v_i,v_{i_s}\}}sign(m_{v_{i^{\prime}}\Rightarrow c_j})\right.\\
				&\left.\times \min_{v_{i^{\prime}}\in \mathcal{N}(c_j)\slash \{v_i,v_{i_s}\}}(\lvert m_{v_{i^{\prime}}\Rightarrow c_j} \rvert)\right)\\
				& +\ln \frac{1-\prod_{v_{i^{\prime}}\in \mathcal{N}(c_j)\slash v_i}\tanh \left(\frac{m_{v_{i^{\prime}}\Rightarrow c_j}}{2}\right)}{1-\prod_{v_{i^{\prime}}\in \mathcal{N}(c_j)\slash \{v_i,v_{i_s}\}} \tanh \left(\frac{m_{v_{i^{\prime}}\Rightarrow c_j}}{2}\right)}
			\end{aligned}
		\end{equation}
		
		\textbf{Vertical update:}
		
		\textbf{for} $i\leftarrow1$ to $3n$, do
		\begin{equation}
			m_{v_{i}\Rightarrow c_j}=\gamma_i + \sum_{c_{j^{\prime}}\in \mathcal{N}(v_i)/c_j }\left[ m_{c_{j^{\prime}}\Rightarrow v_i}-\ln (1-p_{i_s})\right]
		\end{equation}
		
		\textbf{Hard decision:}
		
		\textbf{for} $i\leftarrow1$ to $3n$, do
		\begin{equation}
			\tilde{\gamma}_i=\gamma_i + \sum_{c_{j^{\prime}}\in \mathcal{N}(v_i)}\left[ m_{c_{j^{\prime}}\Rightarrow v_i}-\ln (1-p_{i_s})\right]
		\end{equation}
		
		\textbf{for} $i\leftarrow1$ to $n$, do
		
		\textbf{if} $\tilde{\gamma}_{i+kn} \ge 0$ for all $k\in\{0,1,2\}$, let
		\begin{equation}
			\hat{e}_{i+kn}=0,\ for\ all\ k\in\{0,1,2\}
		\end{equation}
		
		\textbf{else}, let
		\begin{equation}
			m=\mathop{\arg\min}\limits_{k\in\{0,1,2\}}\ \tilde{\gamma}_{i+kn}
		\end{equation}
		
		\begin{equation}
			\hat{e}_{i+mn}=1
		\end{equation}
		
		and
		\begin{equation}
			\hat{e}_{i+k^{\prime}n}=0
		\end{equation}
		for all $k^{\prime}\in\{0,1,2\}/\{m\}$.
		
		if $H_d\cdot\hat{e}=\textbf{s}$, return “\textbf{converge}” and halt.
		
		else if the maximum iteration number $iter_{max}$ is reached, return “\textbf{fail}” and halt.
		
		else repeat from \textbf{Horizontal update}.
	\end{algorithmic}
\end{breakablealgorithm}

\begin{breakablealgorithm}
	\caption{Sum-product-based FDBP}
	\label{Sum-product-based FDBP}
	%\LinesNumbered
	\begin{algorithmic}
		\textbf{Input:} decoupled parity-check matrix $H_d=(H_z |H_x |(H_x\oplus H_z ))$ with dimension of $m\times 3n$,\\
		error rate vector $\textbf{\textit{p}}=(p_1,\cdots,p_n \mid p_{n+1},\cdots,p_{2n}\mid p_{2n+1},\cdots,p_{3n})$ with size of $3n$, where $p_1=\cdots=p_n=p_X$, $p_{n+1}=\cdots=p_{2n}=p_Z$, and $p_{2n+1}=\cdots=p_{3n}=p_Y$,\\
		the maximum iteration number $iter_{max}$,\\
		syndrome vector $\textbf{s}=(s_1,\cdots,s_m )$ with size of $m$.
		
		\textbf{Output:} binary error vector $\hat{e}$ with size of $3n$.
		
		\textbf{Initialization:}
		
		\textbf{for} $i\leftarrow1$ to $3n$, let
		\begin{equation}
			\gamma_i = \ln \frac{1-p_i}{p_i}
		\end{equation}
		
		\textbf{for} each $v_i\in V$, $c_j\in C$ such that $H_d (j,i)=1$, let
		\begin{equation}
			m_{v_i\Rightarrow c_j}= \gamma_i 
		\end{equation}
		
		\textbf{Horizontal update:}
		
		\textbf{for} $j\leftarrow1$ to $m$, do
		
		\textbf{if} $s_j=0$
		\begin{equation}
			m_{c_j\Rightarrow v_i}= \ln \frac{1+\prod_{v_{i^{\prime}}\in \mathcal{N}(c_j)\slash v_i}\tanh \left(\frac{m_{v_{i^{\prime}}\Rightarrow c_j}}{2}\right)}{1-\prod_{v_{i^{\prime}}\in \mathcal{N}(c_j)\slash \{v_i,v_{i_s}\}} \tanh \left(\frac{m_{v_{i^{\prime}}\Rightarrow c_j}}{2}\right)}
		\end{equation}
		
		\textbf{if} $s_j=1$
		\begin{equation}
			m_{c_j\Rightarrow v_i}= \ln \frac{1-\prod_{v_{i^{\prime}}\in \mathcal{N}(c_j)\slash v_i}\tanh \left(\frac{m_{v_{i^{\prime}}\Rightarrow c_j}}{2}\right)}{1+\prod_{v_{i^{\prime}}\in \mathcal{N}(c_j)\slash \{v_i,v_{i_s}\}} \tanh \left(\frac{m_{v_{i^{\prime}}\Rightarrow c_j}}{2}\right)}
		\end{equation}
		
		\textbf{Vertical update:}
		
		\textbf{for} $i\leftarrow1$ to $3n$, do
		\begin{equation}
			m_{v_{i}\Rightarrow c_j}=\gamma_i + \sum_{c_{j^{\prime}}\in \mathcal{N}(v_i)/c_j }\left[ m_{c_{j^{\prime}}\Rightarrow v_i}-\ln (1-p_{i_s})\right]
		\end{equation}
		
		\textbf{Hard decision:}
		
		\textbf{for} $i\leftarrow1$ to $3n$, do
		\begin{equation}
			\tilde{\gamma}_i=\gamma_i + \sum_{c_{j^{\prime}}\in \mathcal{N}(v_i)}\left[ m_{c_{j^{\prime}}\Rightarrow v_i}-\ln (1-p_{i_s})\right]
		\end{equation}
		
		\textbf{for} $i\leftarrow1$ to $n$, do
		
		\textbf{if} $\tilde{\gamma}_{i+kn} \ge 0$ for all $k\in\{0,1,2\}$, let
		\begin{equation}
			\hat{e}_{i+kn}=0,\ for\ all\ k\in\{0,1,2\}
		\end{equation}
		
		\textbf{else}, let
		\begin{equation}
			m=\mathop{\arg\min}\limits_{k\in\{0,1,2\}}\ \tilde{\gamma}_{i+kn}
		\end{equation}
		
		\begin{equation}
			\hat{e}_{i+mn}=1
		\end{equation}
		
		and
		\begin{equation}
			\hat{e}_{i+k^{\prime}n}=0
		\end{equation}
		for all $k^{\prime}\in\{0,1,2\}/\{m\}$.
		
		if $H_d\cdot\hat{e}=\textbf{s}$, return “\textbf{converge}” and halt.
		
		else if the maximum iteration number $iter_{max}$ is reached, return “\textbf{fail}” and halt.
		
		else repeat from \textbf{Horizontal update}.
	\end{algorithmic}
\end{breakablealgorithm}
The simulation results of min-sum-based and sum-product-based FDBP on XZZX surface code and planar surface code are shown in Sect. \ref{4}.

\section {Simulation results}
\label{4}
We perform comparative simulations between min-sum-based and sum-product-based SBP, PDBP and FDBP on XZZX surface code and planar surface code under different noise channels, including pure $X$, $Z$, $Y$ and depolarizing noise channels. All the simulations in this paper are combined with order$-0$ OSD and under the assumption that there is no measurement error in syndrome detection. We use open-source Python library \cite{Roffe_LDPC_Python_tools_2022} to implement SBP and PDBP to complete our simulations.
\subsection{Comparison between min-sum-based SBP, PDBP and FDBP}
\label{4.1}

Fig. \ref{fig:ms-based Comparison between SBP, PDBP and FDBP on XZZX surface code pure noise} shows the logical error rate (LER) as a function of physical qubit error rate of min-sum-based SBP, PDBP and FDBP on XZZX surface code with different lattice size $L$ under pure $X$, $Z$ and $Y$ noise. Fig. \ref{fig:ms-based Comparison between SBP, PDBP and FDBP on XZZX surface code pure noise} (a)$\sim$(f) show that under pure $X$ and $Z$ noise, the curves of min-sum-based SBP, PDBP and FDBP almost coincide perfectly, because Tanner graphs of SBP, PDBP and FDBP in this situation can be reduced to the same form by removing the variable nodes whose corresponding errors won’t happen in the noise model. Under pure $X$ and $Z$ noise, the code-capacity thresholds of min-sum-based SBP, PDBP and FDBP on XZZX surface code are all about $50\%$. However, for pure $Y$ noise, as shown in Fig. \ref{fig:ms-based Comparison between SBP, PDBP and FDBP on XZZX surface code pure noise} (g)$\sim$(i), PDBP and FDBP outperform SBP, while the performances of PDBP and FDBP are almost the same. Under pure $Y$ noise, the code-capacity thresholds of min-sum-based PDBP and FDBP on XZZX surface code are both about $40\%$, which are much higher than that of SBP, which is only about $9\%$. Similarly, Fig. \ref{fig:ms-based Comparison between SBP, PDBP and FDBP on planar surface code pure noise} shows the LER as a function of physical qubit error rate of min-sum-based SBP, PDBP and FDBP on planar surface code with different lattice size $L$ under pure $X$, $Z$ and $Y$ noise. Fig. \ref{fig:ms-based Comparison between SBP, PDBP and FDBP on planar surface code pure noise} (a)$\sim$(c) and (d)$\sim$(f) show the code-capacity thresholds of min-sum-based SBP, PDBP and FDBP on planar surface code under pure $X$ and $Z$ noise, respectively, which are all about $9\%$. While for pure $Y$ noise, as shown in Fig. \ref{fig:ms-based Comparison between SBP, PDBP and FDBP on planar surface code pure noise} (g)$\sim$(i), the code-capacity thresholds of min-sum-based PDBP and FDBP on planar surface code are both about $25\%$, which are also much higher than that of SBP, $9\%$. These results support that in the decoupling representation of Pauli errors decoders tend to have more balanced capability to decode Pauli $X$, $Y$ and $Z$ errors.

\begin{figure*}[htbp]
	\centering
	
	\begin{minipage}{1\linewidth}
		\centering
		\includegraphics[width=1\linewidth]{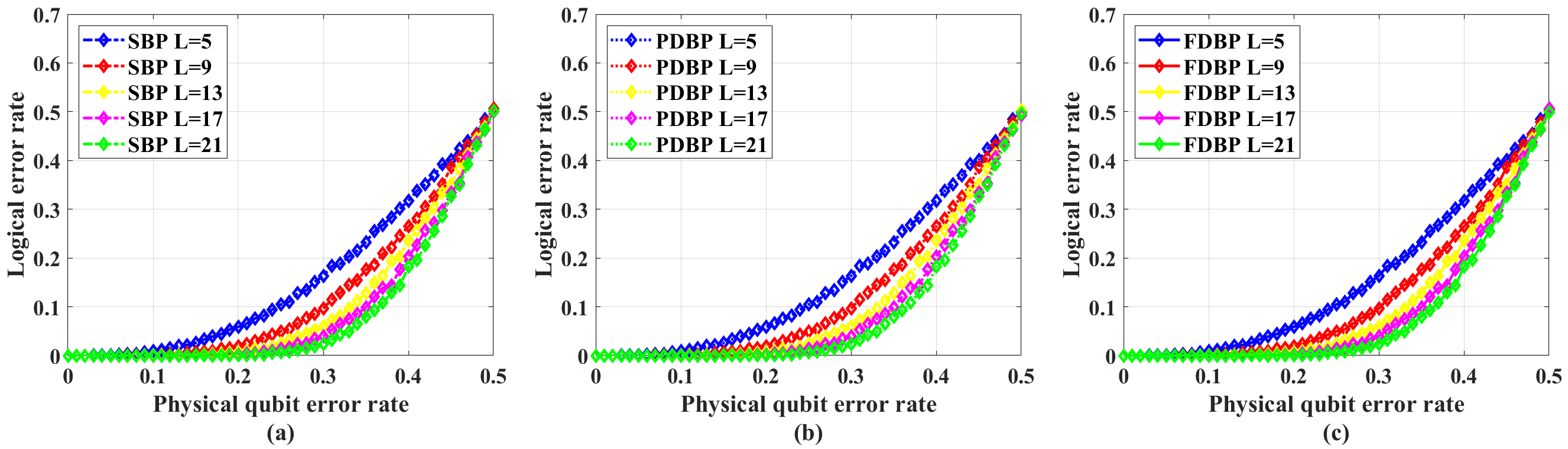}
	\end{minipage}
	%\qquad
	\begin{minipage}{1\linewidth}
		\centering
		\includegraphics[width=1\linewidth]{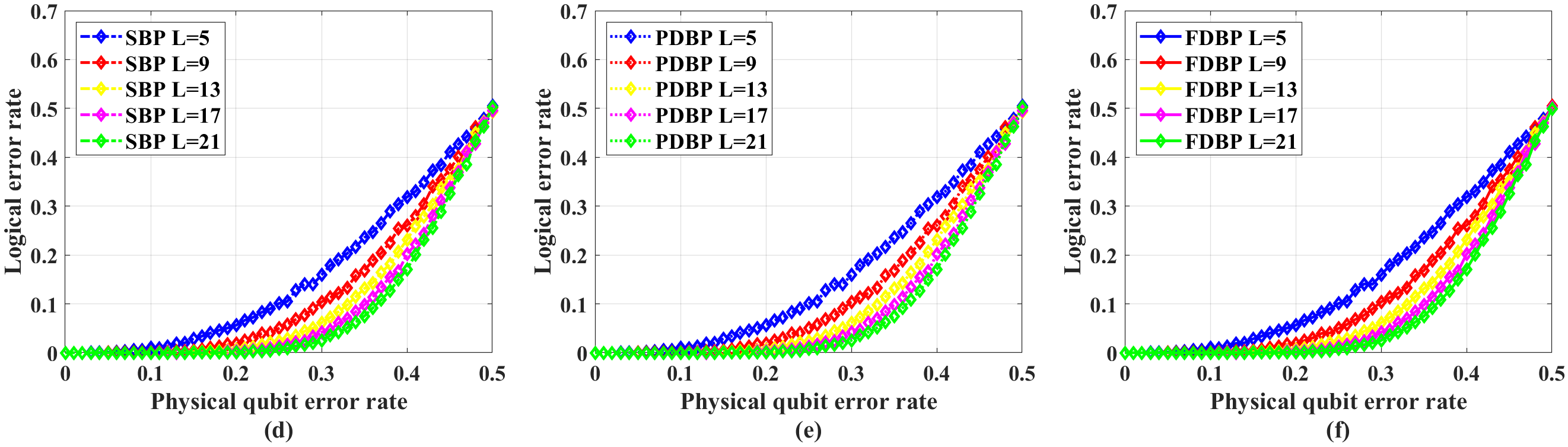}
	\end{minipage}
	%\qquad
	\begin{minipage}{1\linewidth}
		\centering
		\includegraphics[width=1\linewidth]{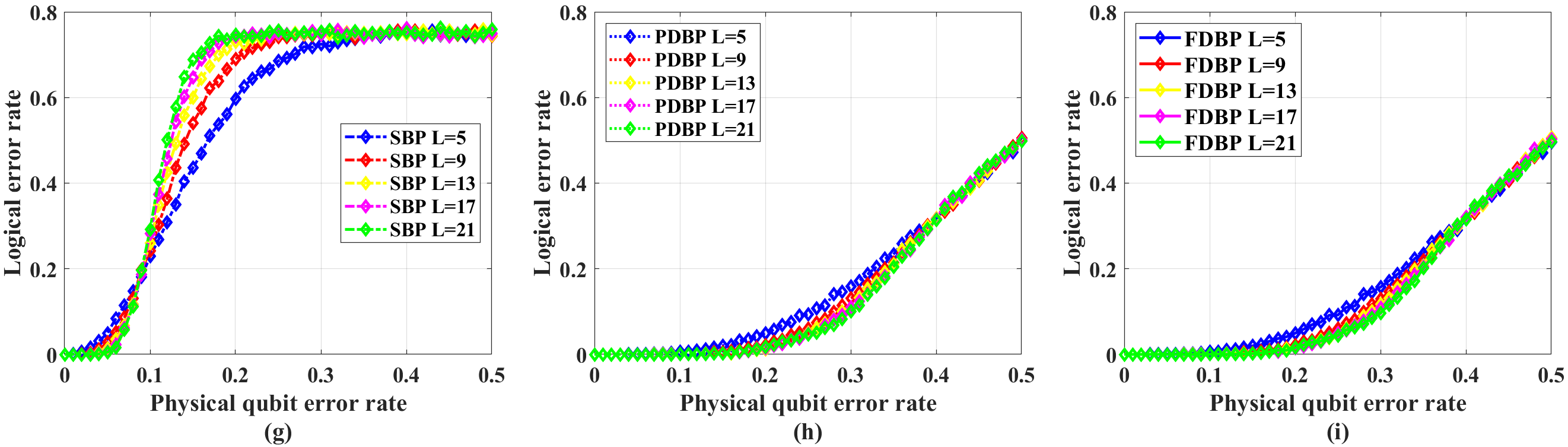}
	\end{minipage}
	\caption{\textbf{Logical error rate as a function of physical qubit error rate and code-capacity thresholds of min-sum-based SBP, PDBP and FDBP on XZZX surface code of different lattice size $L$ under pure $X$, $Z$ and $Y$ noise.}(a) Min-sum-based SBP on XZZX surface codes under pure $X$ noise. (b) Min-sum-based PDBP on XZZX surface codes under pure $X$ noise. (c) Min-sum-based FDBP on XZZX surface codes under pure $X$ noise. (d) Min-sum-based SBP on XZZX surface codes under pure $Z$ noise. (e) Min-sum-based PDBP on XZZX surface codes under pure $Z$ noise. (f) Min-sum-based FDBP on XZZX surface codes under pure $Z$ noise. (g) Min-sum-based SBP on XZZX surface codes under pure $Y$ noise. (h) Min-sum-based PDBP XZZX surface codes under pure $Y$ noise. (i) Min-sum-based FDBP on XZZX surface codes under pure $Y$ noise.}
	\label{fig:ms-based Comparison between SBP, PDBP and FDBP on XZZX surface code pure noise}
\end{figure*}

\begin{figure*}[htbp]
	\centering
	
	\begin{minipage}{1\linewidth}
		\centering
		\includegraphics[width=1\linewidth]{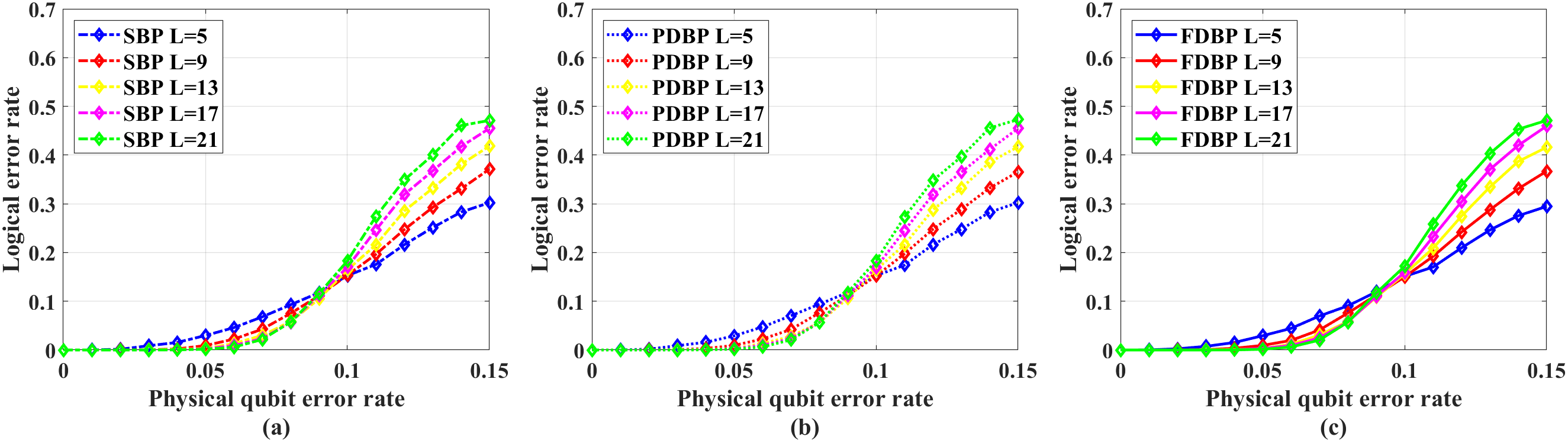}
	\end{minipage}
	%\qquad
	\begin{minipage}{1\linewidth}
		\centering
		\includegraphics[width=1\linewidth]{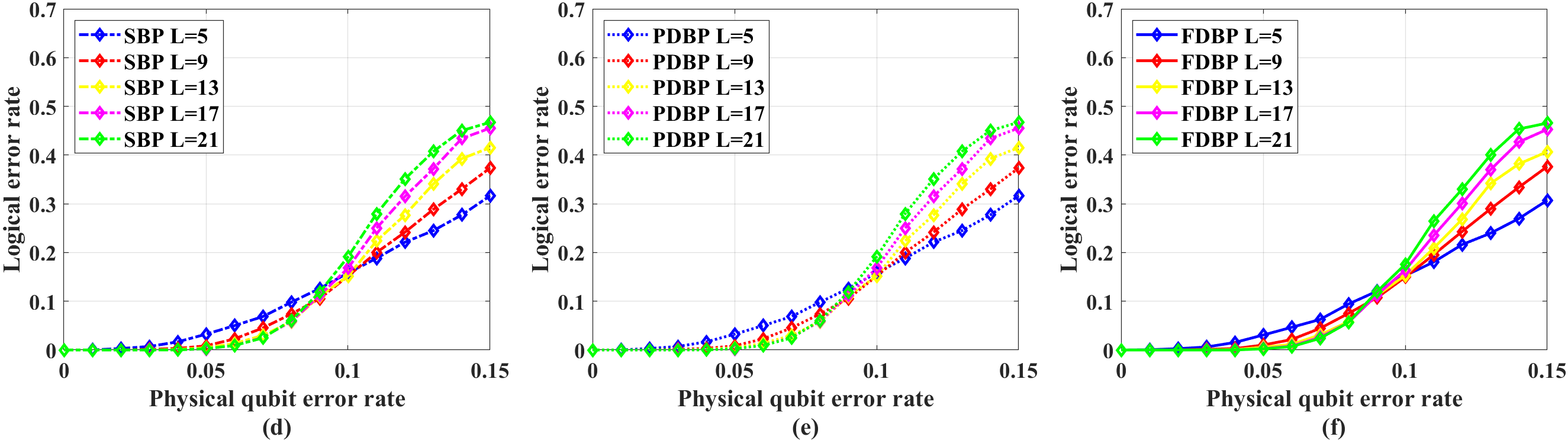}
	\end{minipage}
	%\qquad
	\begin{minipage}{1\linewidth}
		\centering
		\includegraphics[width=1\linewidth]{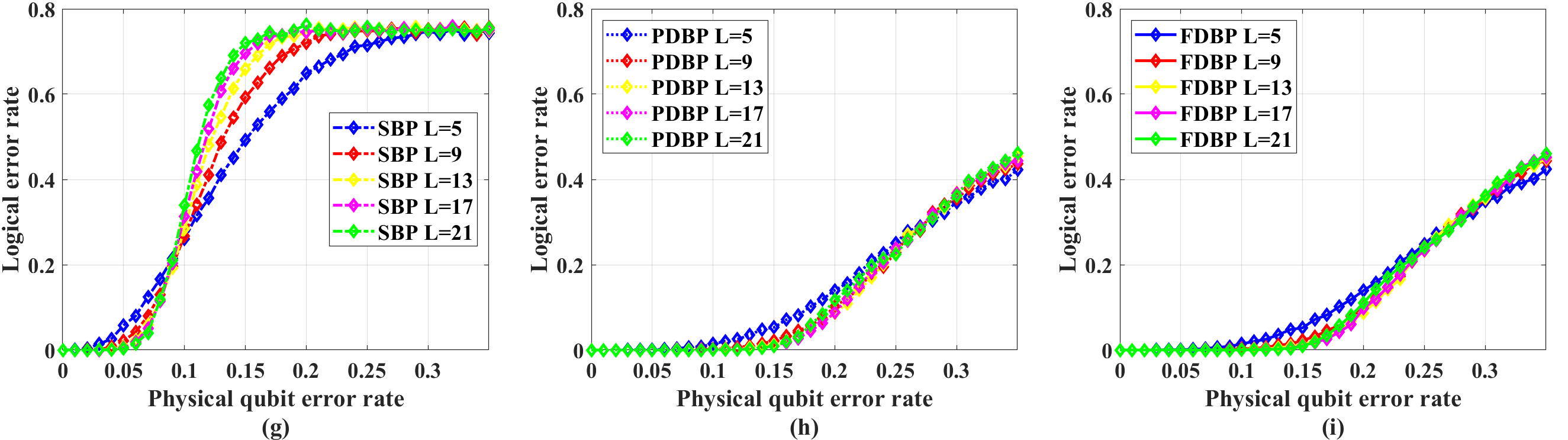}
	\end{minipage}
	\caption{\textbf{Logical error rate as a function of physical qubit error rate and code-capacity thresholds of min-sum-based SBP, PDBP and FDBP on planar surface code of different lattice size $L$ under pure $X$, $Z$ and $Y$ noise.}(a) Min-sum-based SBP on planar surface codes under pure $X$ noise. (b) Min-sum-based PDBP on planar surface codes under pure $X$ noise. (c) Min-sum-based FDBP on planar surface codes under pure $X$ noise. (d) Min-sum-based SBP on planar surface codes under pure $Z$ noise. (e) Min-sum-based PDBP on planar surface codes under pure $Z$ noise. (f) Min-sum-based FDBP on planar surface codes under pure $Z$ noise. (g) Min-sum-based SBP on planar surface codes under pure $Y$ noise. (h) Min-sum-based PDBP planar surface codes under pure $Y$ noise. (i) Min-sum-based FDBP on planar surface codes under pure $Y$ noise.}
	\label{fig:ms-based Comparison between SBP, PDBP and FDBP on planar surface code pure noise}
\end{figure*}

%Why can PDBP and FDBP greatly outperform SBP under pure $Y$ noise? Analysis on the Tanner graph might reveal the reason. As shown in Fig. \ref{fig:differences of Tanner graphs}, under pure $Y$ noise, the structures which actually pass the messages in the Tanner graphs of SBP, PDBP and FDBP are different. The loops in the structure indicated by the red dashed line which actually passes the message in the Tanner graph of SBP might be eliminated in the structure which actually passes the message in the Tanner graphs of PDBP and FDBP.
%
%\begin{figure}[htbp]
%\centering
%\includegraphics[width=0.7\textwidth]{Comparsion_of_Tanner_graph1.pdf}
%\caption{\textbf{The differences of Tanner graphs corresponding to symplectic representation and decoupling representation.}(a) The structure that actually passes messages in Tanner graphs corresponding to symplectic representation under pure $Y$ noise(b) The structure that that actually passes messages in Tanner graphs corresponding to decoupling representation under pure $Y$ noise.}
%\label{fig:differences of Tanner graphs}
%\end{figure}

Due to the more balanced capability to decode Pauli $X$, $Y$ and $Z$ errors, PDBP and FDBP outperform SBP under depolarizing noise. As shown in Fig. \ref{fig:ms-based Comparison between SBP, PDBP and FDBP on XZZX surface code_jpg} (a)$\sim$(c), the code-capacity thresholds of min-sum-based SBP, PDBP and FDBP on XZZX surface code under depolarizing noise are about $13.7\%$, $13.8\%$ and $16\%$, respectively. And Fig. \ref{fig:ms-based Comparison between SBP, PDBP and FDBP on planar surface code_jpg} (a)$\sim$(c) show the code-capacity thresholds of min-sum-based SBP, PDBP and FDBP on planar surface code under depolarizing noise are about $14\%$, $13.5\%$ and $15.6\%$, respectively. Min-sum-based FDBP has the highest code-capacity threshold. In order to further compare their difference, we select three curves corresponding to different lattice size $L=5$, $13$ and $21$ from Fig. \ref{fig:ms-based Comparison between SBP, PDBP and FDBP on XZZX surface code_jpg} (a)$\sim$(c) respectively and obtain Fig. \ref{fig:ms-based Comparison between SBP, PDBP and FDBP on XZZX surface code_jpg} (d)$\sim$(f). Doing similarly for Fig. \ref{fig:ms-based Comparison between SBP, PDBP and FDBP on planar surface code_jpg} (a)$\sim$(c) and we obtain Fig. \ref{fig:ms-based Comparison between SBP, PDBP and FDBP on planar surface code_jpg} (d)$\sim$(f). We can see that, the LER of min-sum-based FDBP is the lowest. Moreover, the larger the lattice size is, the larger the difference between min-sum-based FDBP and another two algorithms is. Moreover, as shown in Fig. \ref{fig:ms-based Comparison between SBP, PDBP and FDBP on XZZX surface code_jpg} (g)$\sim$(i) and Fig. \ref{fig:ms-based Comparison between SBP, PDBP and FDBP on planar surface code_jpg} (g)$\sim$(i), the average number of iterations of min-sum-based FDBP is the least.

\begin{figure*}[htbp]
	\centering
	\begin{minipage}{1\linewidth}
		\centering
		\includegraphics[width=1\linewidth]{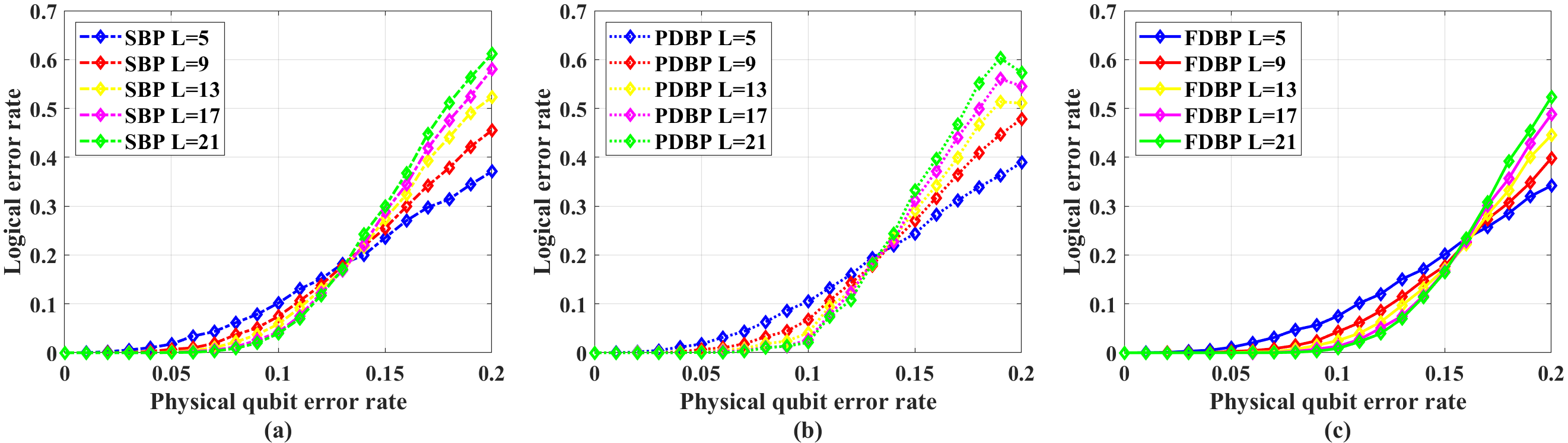}
	\end{minipage}
	%\qquad
	\begin{minipage}{1\linewidth}
		\centering
		\includegraphics[width=1\linewidth]{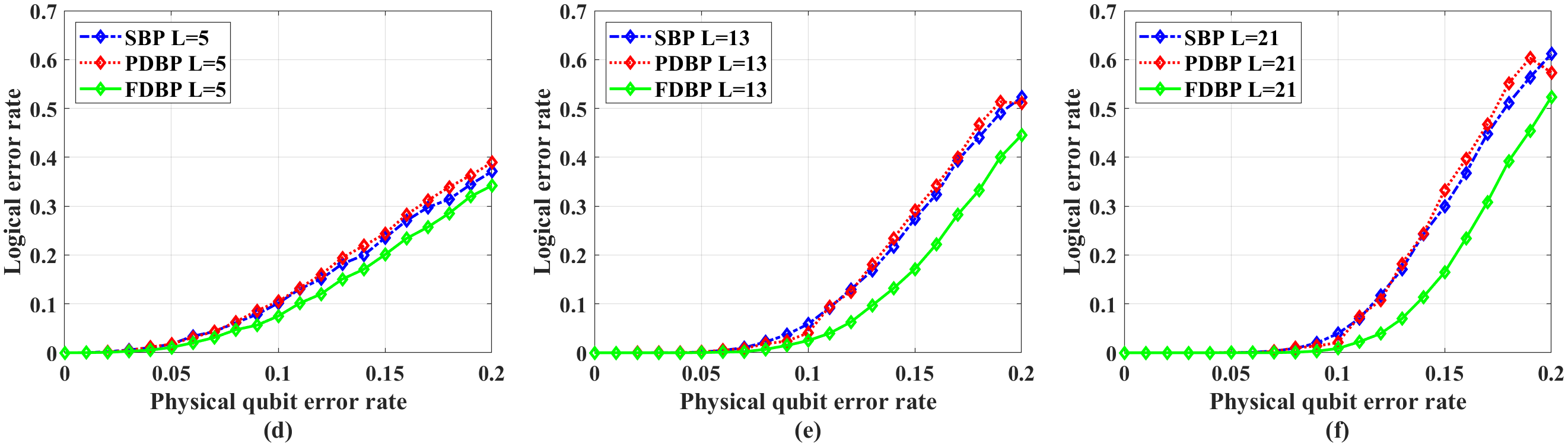}
	\end{minipage}
	%\qquad
	\begin{minipage}{1\linewidth}
		\centering
		\includegraphics[width=1\linewidth]{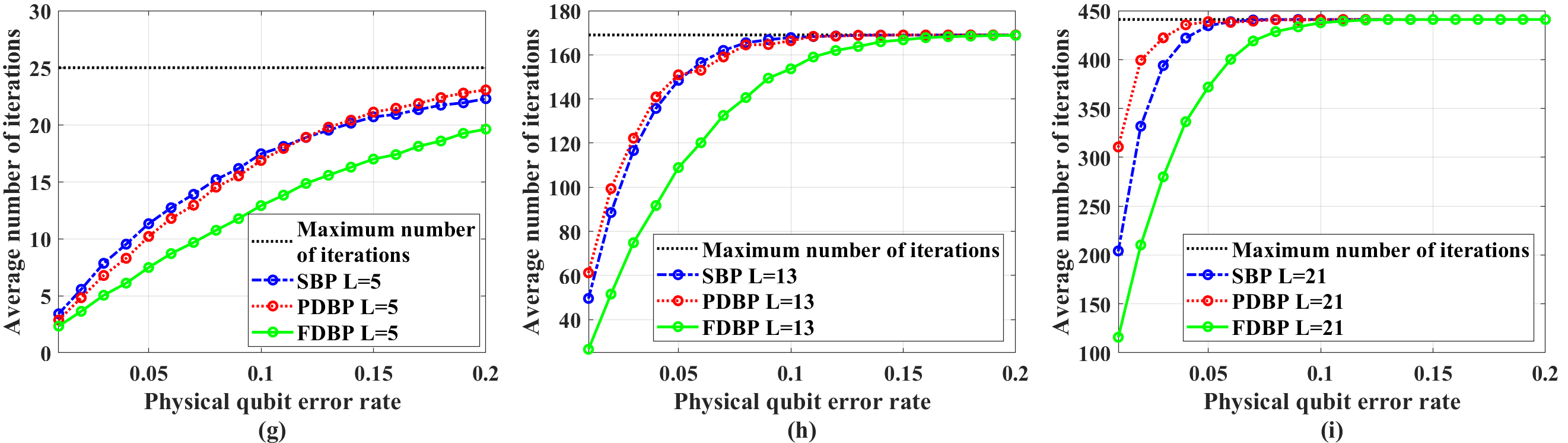}
	\end{minipage}
	\caption{Error-correcting performances of min-sum-based (a) SBP, (b) PDBP and (c) FDBP on XZZX surface code with different lattice size $L$ under depolarizing noise. We observe the code-capacity thresholds of min-sum-based SBP, PDBP and FDBP on XZZX surface code under depolarizing noise are $13.7\%$, $13.8\%$ and $16\%$, respectively. Comparison of LER of min-sum-based SBP, PDBP and FDBP on XZZX surface code with Lattice size (d) $L=5$, (e) $L=13$ and (e) $L=21$. The LER of FDBP is lowest and the larger the lattice size is, the larger the difference in logical error rate between min-sum-based FDBP and another two algorithms is. The average number of iterations of min-sum-based SBP, PDBP and FDBP on XZZX surface code with Lattice size (g) $L=5$, (h) $L=13$ and (i) $L=21$. The maximum number of iterations is the code length $N=L^2$. }
	\label{fig:ms-based Comparison between SBP, PDBP and FDBP on XZZX surface code_jpg}
\end{figure*}

\begin{figure*}[htbp]
	\centering
	\begin{minipage}{1\linewidth}
		\centering
		\includegraphics[width=1\linewidth]{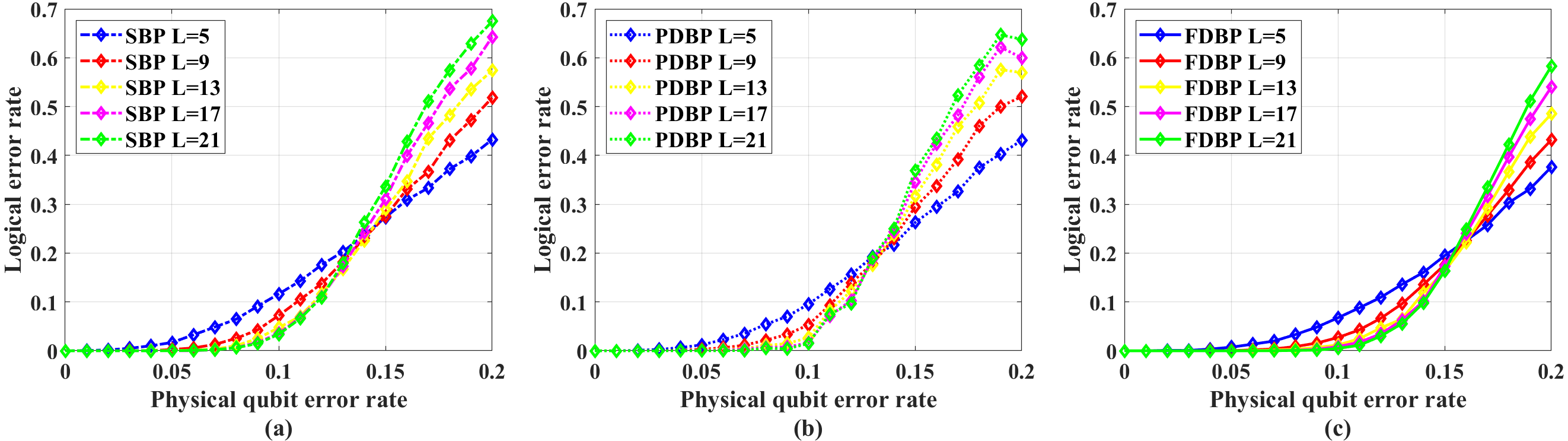}
	\end{minipage}
	%\qquad
	\begin{minipage}{1\linewidth}
		\centering
		\includegraphics[width=1\linewidth]{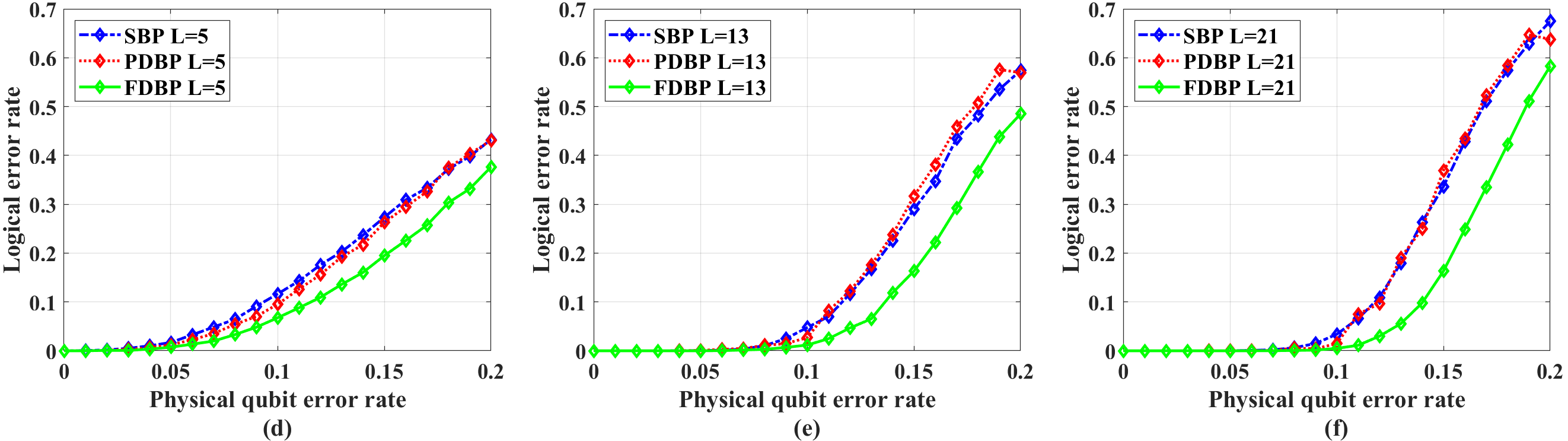}
	\end{minipage}
	%\qquad
	\begin{minipage}{1\linewidth}
		\centering
		\includegraphics[width=1\linewidth]{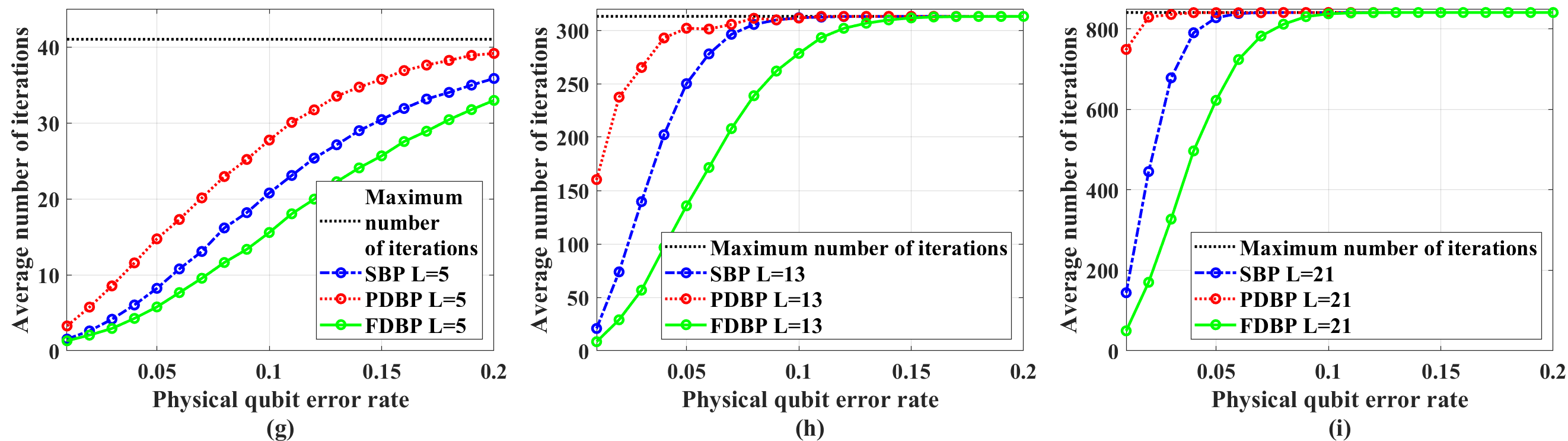}
	\end{minipage}
	\caption{Error-correcting performances of min-sum-based (a) SBP, (b) PDBP and (c) FDBP on planar surface code with different lattice size $L$ under depolarizing noise. We observe the code-capacity thresholds of min-sum-based SBP, PDBP and FDBP on planar surface code under depolarizing noise are $14\%$, $13.5\%$ and $15.6\%$, respectively. Comparison of LER of min-sum-based SBP, PDBP and FDBP on planar surface code with Lattice size (d) $L=5$, (e) $L=13$ and (e) $L=21$. The LER of FDBP is lowest and the larger the lattice size is, the larger the difference in logical error rate between min-sum-based FDBP and another two algorithms is. The average number of iterations of min-sum-based SBP, PDBP and FDBP on planar surface code with Lattice size (g) $L=5$, (h) $L=13$ and (i) $L=21$. The maximum number of iterations is the code length $N=2L^2-2L+1$. }
	\label{fig:ms-based Comparison between SBP, PDBP and FDBP on planar surface code_jpg}
\end{figure*}

Readers may wonder since PDBP and FDBP shows almost the same error-correcting performance under pure X, Z and Y noise, namely, they have almost the same capability to decode Pauli $X$, $Y$ and $Z$ errors when only one of these three kinds of Pauli errors might happen, why FDBP outperforms PDBP under depolarizing noise where Pauli $X$, $Y$ and $Z$ errors may happen with the same probability? The reason is that, under depolarizing noise, to be fully compatible with the decoupling representation of Pauli errors, the message update rules and hard decision of FDBP take the restraint condition Eq. (\ref{restraint condition}) into account, while PDBP does not. Thus, the performance of FDBP is better than that of PDBP under depolarizing noise. However, under pure $X$, $Z$ and $Y$ noise, the message update and hard decision of FDBP is mathematically reduced to those of PDBP, which means the restraint condition is not working in this case. Thus, they have almost the same error-correcting performance under pure $X$, $Z$ and $Y$ noise.

\subsection {Comparison between sum-product-based SBP, PDBP and FDBP}
\label{4.2}
Fig. \ref{fig:ps-based Comparison between SBP, PDBP and FDBP on XZZX surface code pure noise} show the LER as a function of physical qubit error rate of sum-product-based SBP, PDBP and FDBP on XZZX surface code with different lattice size $L$ under pure $X$, $Z$ and $Y$ noise. Fig. \ref{fig:ps-based Comparison between SBP, PDBP and FDBP on XZZX surface code pure noise} (a)$\sim$(f) show that under pure X and $Z$ noise, the curves of sum-product-based SBP, PDBP and FDBP also almost coincide. Under pure $X$ and $Z$ noise, the code-capacity thresholds of sum-product-based SBP, PDBP and FDBP on XZZX surface code are all about $50\%$. Under pure $Y$ noise, within physical qubit error rate regime $0.15\sim0.3$, we observe that the larger the lattice size $L$ is, the higher the LER of sum-product-based PDBP and FDBP is, which is quite abnormal. This is because that at the step of \textbf{vertical update} of sum-product-based PDBP and FDBP, we may face with the situation that an infinite number minus another infinite number. In this case, computed result is ``\textbf{Not a Number}'' in computer. Thus, we should reset the computed result to $0$, which leads to a decrease in computational accuracy. This situation is more likely to occur under pure $Y$ noise  within physical qubit error rate regime $0.15\sim0.3$ in our simulations. Nonetheless, as shown in Fig. \ref{fig:ps-based Comparison between SBP, PDBP and FDBP on XZZX surface code pure noise} (g)$\sim$(i), their LER is much lower than that of SBP. 

Similarly, Fig. \ref{fig:ps-based Comparison between SBP, PDBP and FDBP on planar surface code pure noise} (a)$\sim$(c) and (d)$\sim$(f) show the code-capacity thresholds of sum-product-based SBP, PDBP and FDBP on planar surface code under pure $X$ and $Z$ noise are all about $10\%$, and as shown in  Fig. \ref{fig:ps-based Comparison between SBP, PDBP and FDBP on planar surface code pure noise} (g)$\sim$(i), the LER of sum-product-based PDBP and FDBP is also much lower than that of SBP.

The above results once again support that in the decoupling representation of Pauli errors decoders tend to have more balanced capability to decode Pauli $X$, $Y$ and $Z$ errors.

\begin{figure*}[htbp]
	\centering
	\begin{minipage}{1\linewidth}
		\centering
		\includegraphics[width=1\linewidth]{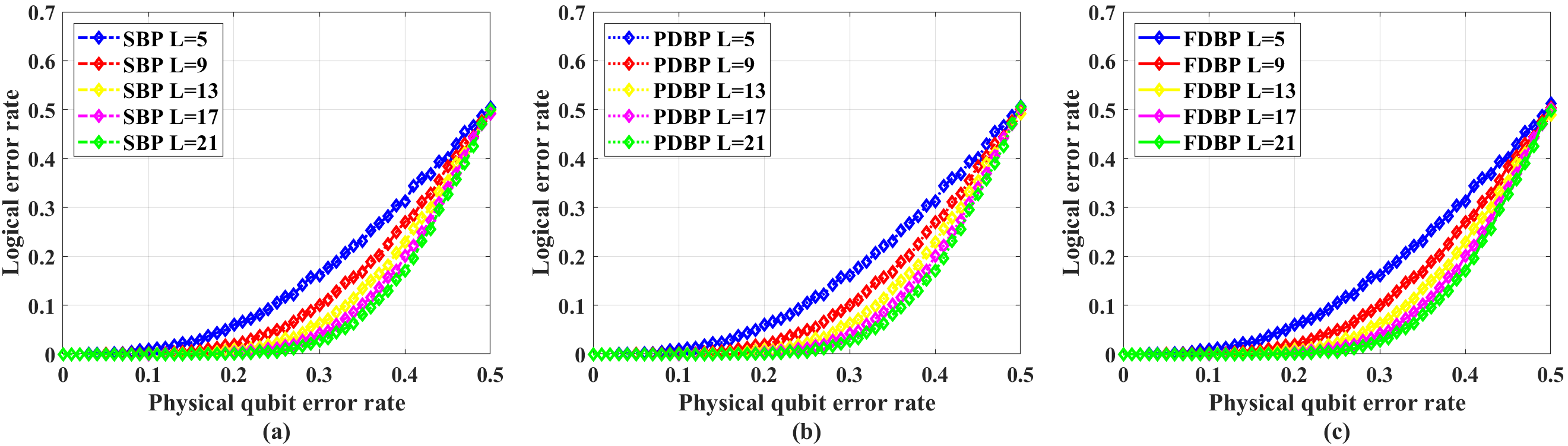}
	\end{minipage}
	%\qquad
	\begin{minipage}{1\linewidth}
		\centering
		\includegraphics[width=1\linewidth]{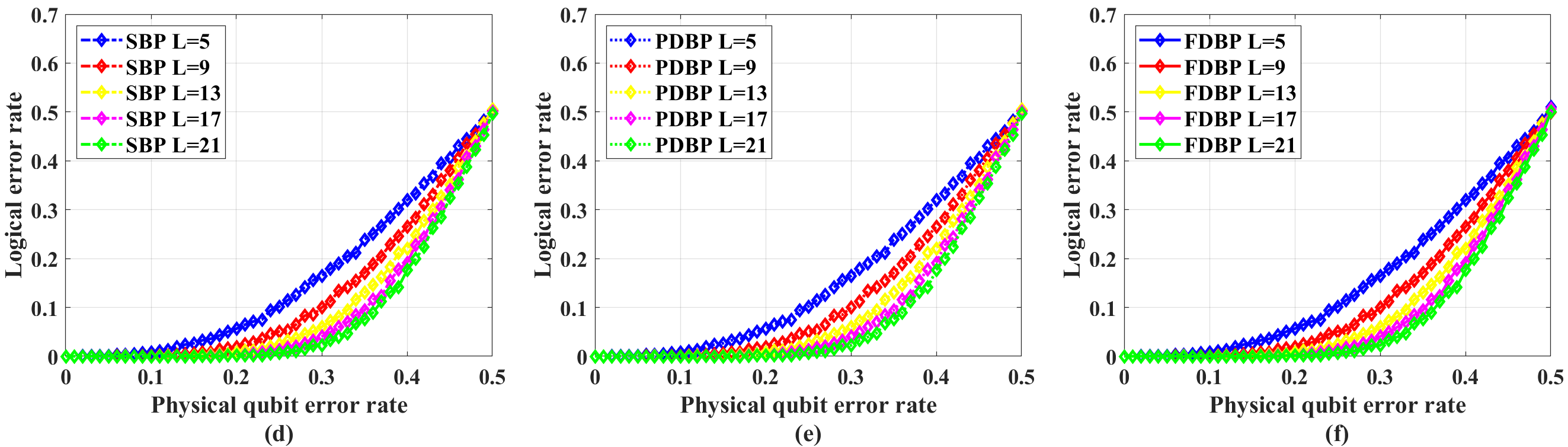}
	\end{minipage}
	%\qquad
	\begin{minipage}{1\linewidth}
		\centering
		\includegraphics[width=1\linewidth]{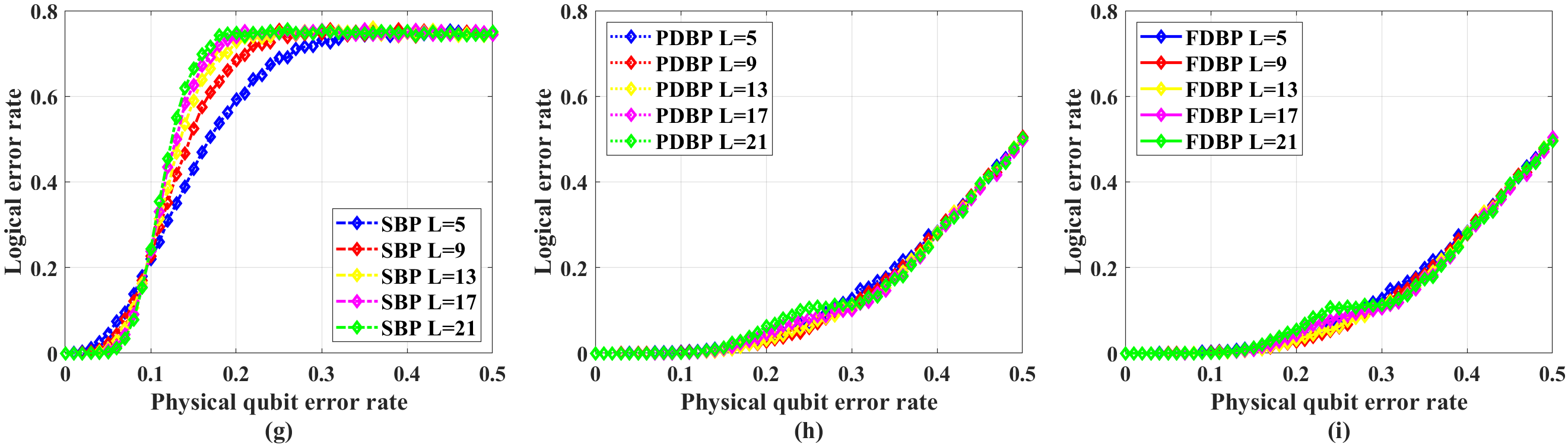}
	\end{minipage}
	\caption{\textbf{Logical error rate as a function of physical qubit error rate and code-capacity thresholds of sum-product-based SBP, PDBP and FDBP on XZZX surface code of different lattice size $L$ under pure $X$, $Z$ and $Y$ noise.}(a) Sum-product-based SBP on XZZX surface codes under pure $X$ noise. (b) Sum-product-based PDBP on XZZX surface codes under pure $X$ noise. (c) Sum-product-based FDBP on XZZX surface codes under pure $X$ noise. (d) Sum-product-based SBP on XZZX surface codes under pure $Z$ noise. (e) Sum-product-based PDBP on XZZX surface codes under pure $Z$ noise. (f) Sum-product-based FDBP on XZZX surface codes under pure $Z$ noise. (g) Sum-product-based SBP on XZZX surface codes under pure $Y$ noise. (h) Sum-product-based PDBP XZZX surface codes under pure $Y$ noise. (i) Sum-product-based FDBP on XZZX surface codes under pure $Y$ noise.}
	\label{fig:ps-based Comparison between SBP, PDBP and FDBP on XZZX surface code pure noise}
\end{figure*}

\begin{figure*}[htbp]
	\centering
	
	\begin{minipage}{1\linewidth}
		\centering
		\includegraphics[width=1\linewidth]{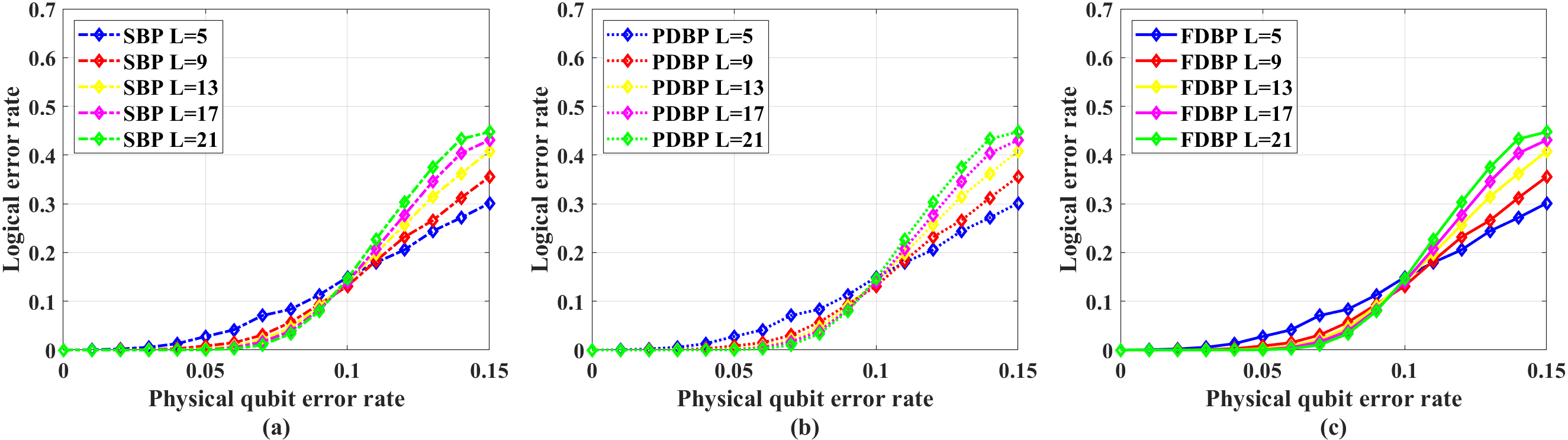}
	\end{minipage}
	%\qquad
	\begin{minipage}{1\linewidth}
		\centering
		\includegraphics[width=1\linewidth]{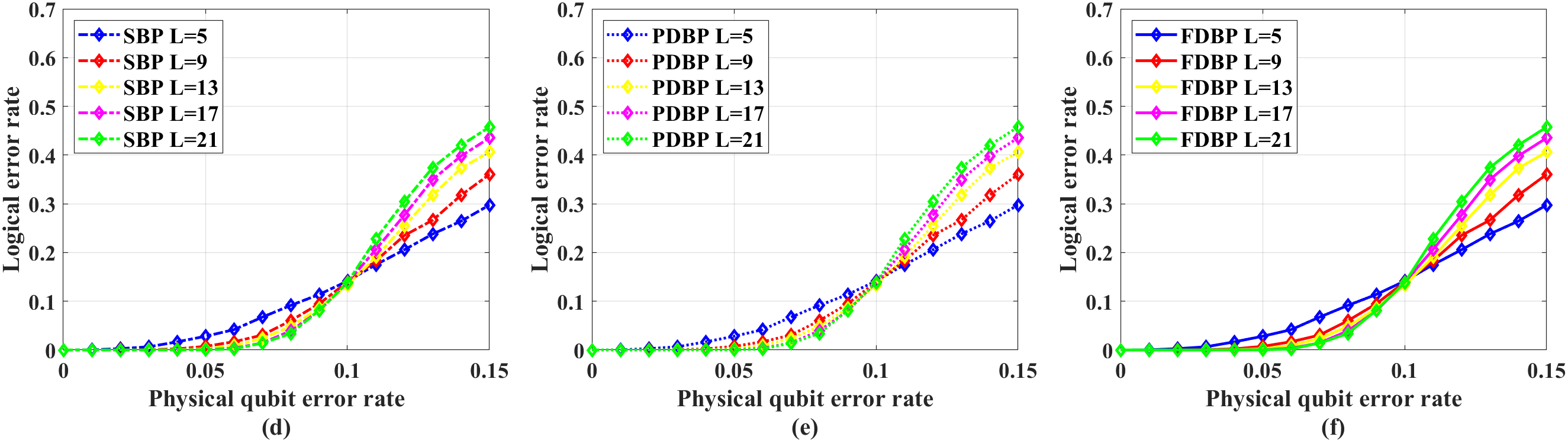}
	\end{minipage}
	%\qquad
	\begin{minipage}{1\linewidth}
		\centering
		\includegraphics[width=1\linewidth]{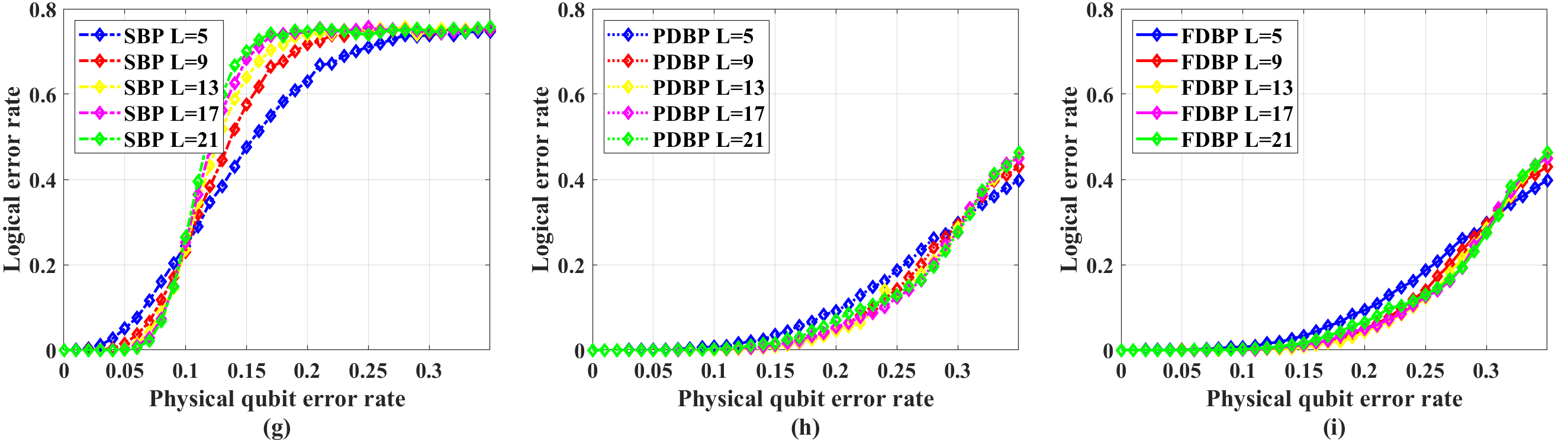}
	\end{minipage}
	\caption{\textbf{Logical error rate as a function of physical qubit error rate and code-capacity thresholds of sum-product-based SBP, PDBP and FDBP on planar surface code of different lattice size $L$ under pure $X$, $Z$ and $Y$ noise.}(a) Sum-product-based SBP on planar surface codes under pure $X$ noise. (b) Sum-product-based PDBP on planar surface codes under pure $X$ noise. (c) Sum-product-based FDBP on planar surface codes under pure $X$ noise. (d) Sum-product-based SBP on planar surface codes under pure $Z$ noise. (e) Sum-product-based PDBP on planar surface codes under pure $Z$ noise. (f) Sum-product-based FDBP on planar surface codes under pure $Z$ noise. (g) Sum-product-based SBP on planar surface codes under pure $Y$ noise. (h) Sum-product-based PDBP planar surface codes under pure $Y$ noise. (i) Sum-product-based FDBP on planar surface codes under pure $Y$ noise.}
	\label{fig:ps-based Comparison between SBP, PDBP and FDBP on planar surface code pure noise}
\end{figure*}

As shown in Fig. \ref{fig:ps-based Comparison between SBP, PDBP and FDBP on XZZX surface code_jpg} (a)$\sim$(c), the code-capacity thresholds of sum-product-based SBP, PDBP and FDBP on XZZX surface code under depolarizing noise are $14.2\%$, $16\%$ and $16.3\%$ , respectively. And Fig. \ref{fig:ps-based Comparison between SBP, PDBP and FDBP on planar surface code_jpg} (a)$\sim$(c) show the code-capacity thresholds of min-sum-based SBP, PDBP and FDBP on planar surface code under depolarizing noise are about $15\%$, $16\%$ and $16.8\%$, respectively. Sum-product-based FDBP has the highest code-capacity threshold. In order to further compare their difference, we select three curves corresponding to different lattice size $L=5$, $13$ and $21$ from Fig. \ref{fig:ps-based Comparison between SBP, PDBP and FDBP on XZZX surface code_jpg} (a)$\sim$(c) respectively and obtain Fig. \ref{fig:ps-based Comparison between SBP, PDBP and FDBP on XZZX surface code_jpg} (d)$\sim$(f). Doing similarly for Fig. \ref{fig:ps-based Comparison between SBP, PDBP and FDBP on planar surface code_jpg} (a)$\sim$(c) and we obatin Fig. \ref{fig:ps-based Comparison between SBP, PDBP and FDBP on planar surface code_jpg} (d)$\sim$(f). We can see that, the LER of sum-product-based FDBP is the lowest when $L>5$. Moreover, the larger the lattice size is, the larger the difference in LER between sum-product-based FDBP and another two algorithms is. Moreover, as shown in  Fig. \ref{fig:ps-based Comparison between SBP, PDBP and FDBP on XZZX surface code_jpg} (g)$\sim$(i) and Fig. \ref{fig:ps-based Comparison between SBP, PDBP and FDBP on planar surface code_jpg} (g)$\sim$(i), the average number of iterations of sum-product-based FDBP is the least.

\begin{figure*}[htbp]
	\centering
	\begin{minipage}{1\linewidth}
		\centering
		\includegraphics[width=1\linewidth]{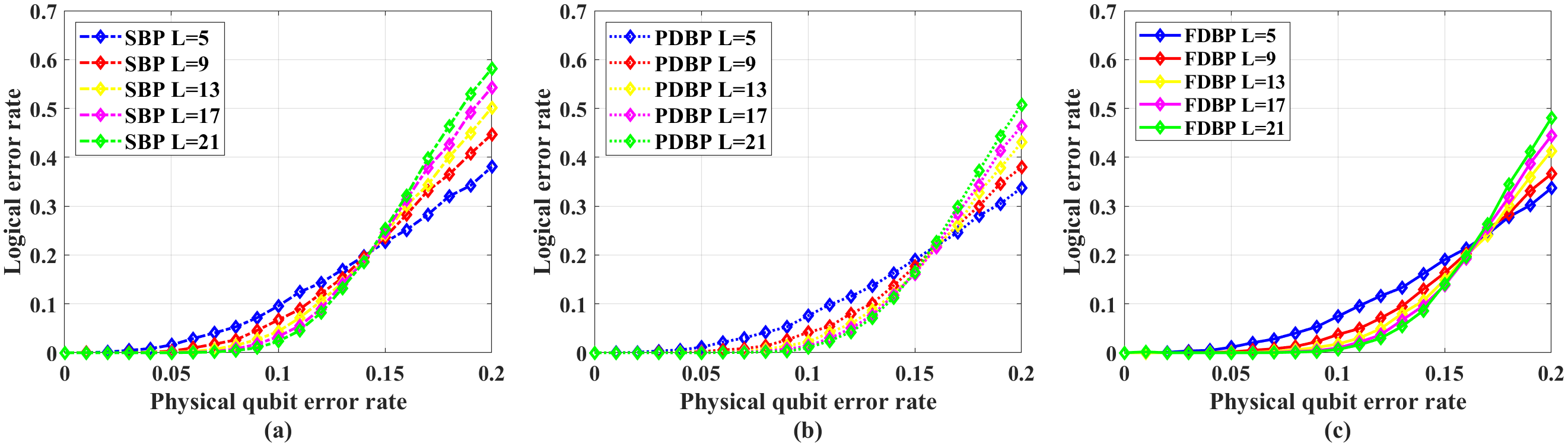}
	\end{minipage}
	%\qquad
	\begin{minipage}{1\linewidth}
		\centering
		\includegraphics[width=1\linewidth]{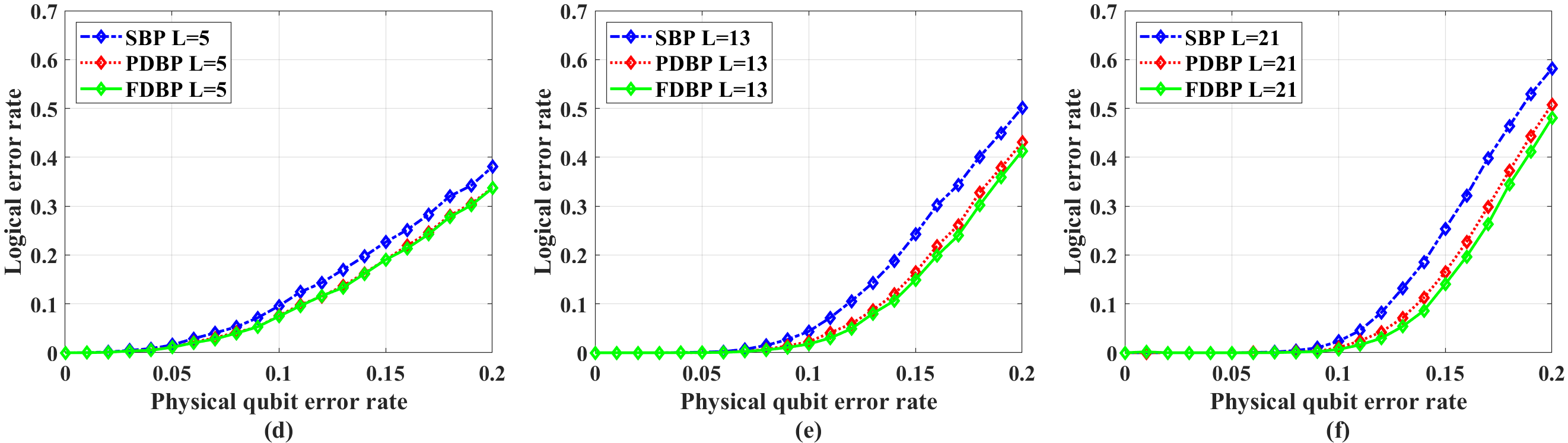}
	\end{minipage}
	%\qquad
	\begin{minipage}{1\linewidth}
		\centering
		\includegraphics[width=1\linewidth]{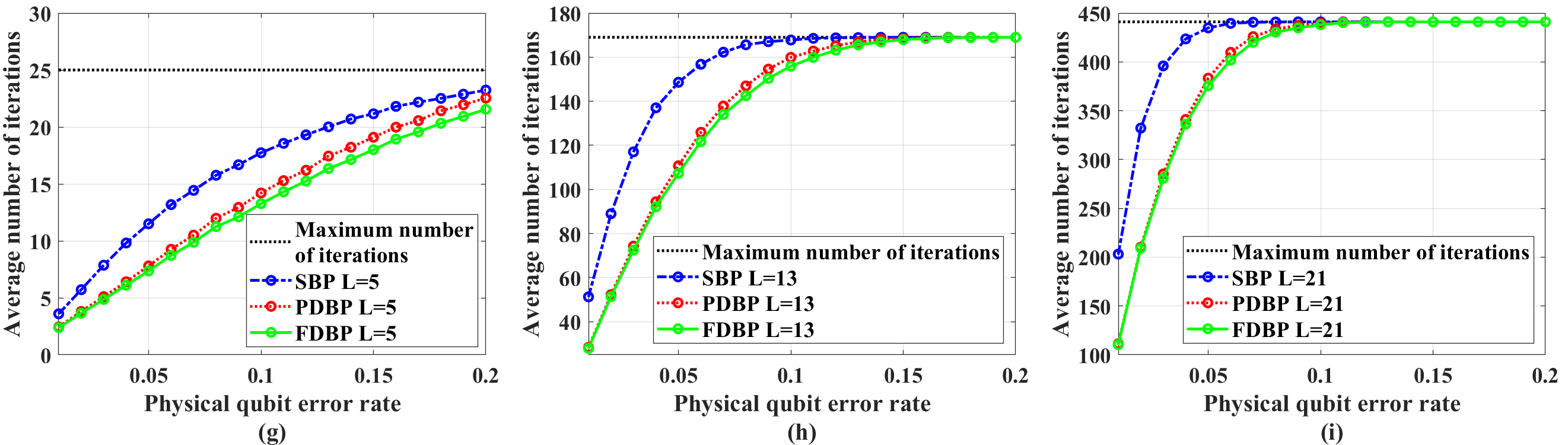}
	\end{minipage}
	\caption{Error-correcting performances of sum-product-based (a) SBP, (b) PDBP and (c) FDBP on XZZX surface code with different lattice size $L$ under depolarizing noise. We observe the code-capacity thresholds of sum-product-based SBP, PDBP and FDBP on XZZX surface code under depolarizing noise are $13.7\%$, $13.8\%$ and $16\%$, respectively. Comparison of LER of sum-product-based SBP, PDBP and FDBP on XZZX surface code with Lattice size (d) $L=5$, (e) $L=13$ and (e) $L=21$. The LER of FDBP is lowest when $L>5$ and the larger the lattice size is, the larger the difference in logical error rate between min-sum-based FDBP and another two algorithms is. The average number of iterations of sum-product-based SBP, PDBP and FDBP on XZZX surface code with Lattice size (g) $L=5$, (h) $L=13$ and (i) $L=21$. The maximum number of iterations is the code length $N=L^2$.}
	\label{fig:ps-based Comparison between SBP, PDBP and FDBP on XZZX surface code_jpg}
\end{figure*}

\begin{figure*}[htbp]
	\centering
	\begin{minipage}{1\linewidth}
		\centering
		\includegraphics[width=1\linewidth]{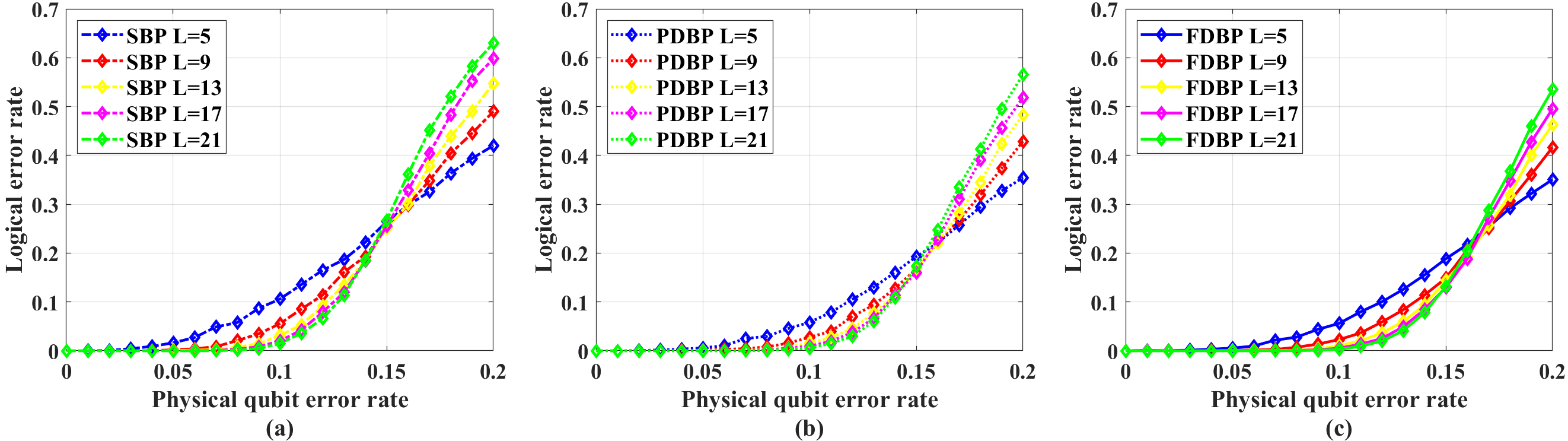}
	\end{minipage}
	%\qquad
	\begin{minipage}{1\linewidth}
		\centering
		\includegraphics[width=1\linewidth]{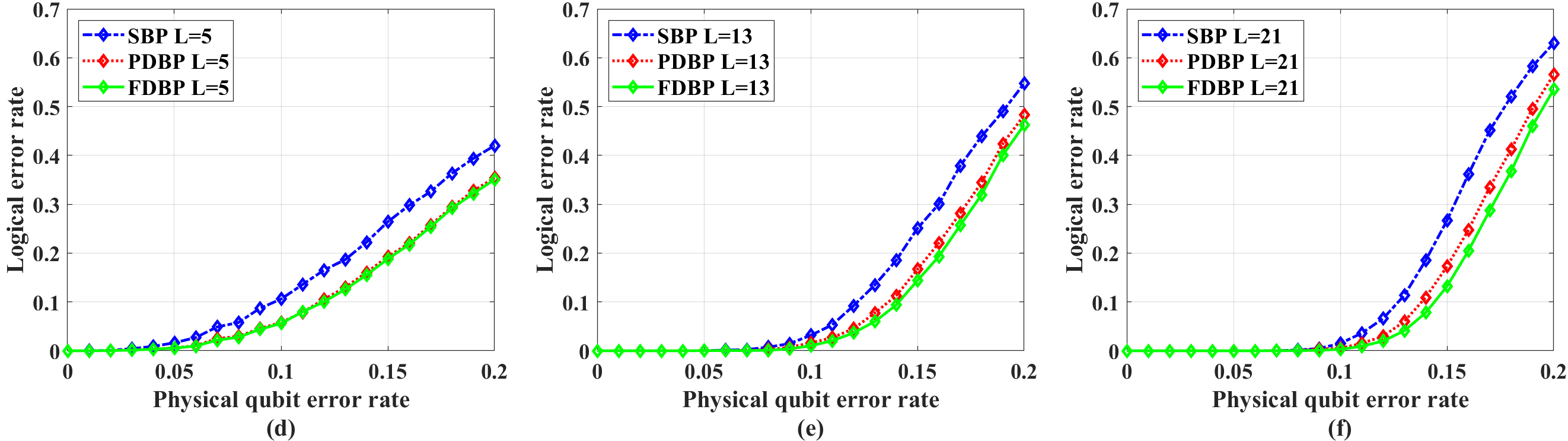}
	\end{minipage}
	%\qquad
	\begin{minipage}{1\linewidth}
		\centering
		\includegraphics[width=1\linewidth]{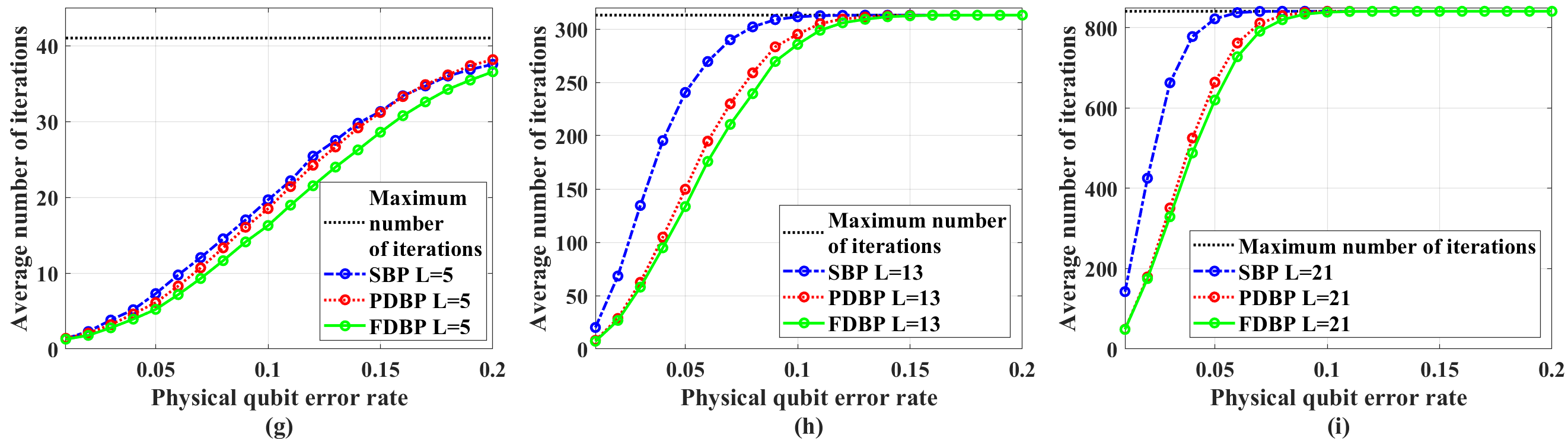}
	\end{minipage}
	\caption{Error-correcting performances of sum-product-based (a) SBP, (b) PDBP and (c) FDBP on planar surface code with different lattice size $L$ under depolarizing noise. We observe the code-capacity thresholds of sum-product-based SBP, PDBP and FDBP on planar surface code under depolarizing noise are $15\%$, $16\%$ and $16.8\%$, respectively. Comparison of LER of sum-product-based SBP, PDBP and FDBP on planar surface code with Lattice size (d) $L=5$, (e) $L=13$ and (e) $L=21$. The LER of FDBP is lowest and the larger the lattice size is, the larger the difference in logical error rate between sum-product-based FDBP and another two algorithms is. The average number of iterations of sum-product-based SBP, PDBP and FDBP on planar surface code with Lattice size (g) $L=5$, (h) $L=13$ and (i) $L=21$. The maximum number of iterations is the code length $N=2L^2-2L+1$. }
	\label{fig:ps-based Comparison between SBP, PDBP and FDBP on planar surface code_jpg}
\end{figure*}

\subsection {Comparison between min-sum-based and sum-product-based message update  algorithms}
\label{4.3}
In theory, min-sum-based message update algorithm is an approximation of sum-product-based message update algorithm \cite{fossorier1999reduced}, which can be reflected by their differences of LER and code-capacity threshold.
For LER, Fig. \ref{fig:ms_ps_compare_FDBP_PDBP_XZZX_depolarizing} (a)$\sim$(c) and Fig. \ref{fig:ms_ps_compare_FDBP_PDBP_surface_depolarizing} (a)$\sim$(c) show that under depolarizing noise, the LER of sum-product-based FDBP is lower than that of min-sum-based FDBP on XZZX surface code and planar surface code. Similarly, Fig. \ref{fig:ms_ps_compare_FDBP_PDBP_XZZX_depolarizing} (d)$\sim$(f) and Fig. \ref{fig:ms_ps_compare_FDBP_PDBP_surface_depolarizing} (d)$\sim$(f) show that the LER of sum-product-based PDBP is also lower than that of min-sum-based PDBP in the same case. For code-capacity threshold, under depolarizing noise, Fig. \ref{fig:ps-based Comparison between SBP, PDBP and FDBP on XZZX surface code_jpg} (b) and (c) show the code-capacity thresholds of sum-product-based PDBP and FDBP on XZZX surface code are $16\%$ and $16.3\%$, respectively, which are higher than those of min-sum-based PDBP and FDBP, $13.8\%$ and $16\%$, as shown in Fig. \ref{fig:ms-based Comparison between SBP, PDBP and FDBP on XZZX surface code_jpg} (b) and (c), respectively. Similarly, Fig. \ref{fig:ps-based Comparison between SBP, PDBP and FDBP on planar surface code_jpg} (b) and (c) show the code-capacity thresholds of sum-product-based PDBP and FDBP on planar surface code are $16\%$ and $16.8\%$, respectively, which are higher than those of min-sum-based PDBP and FDBP, $13.5\%$ and $15.6\%$, as shown in Fig. \ref{fig:ms-based Comparison between SBP, PDBP and FDBP on planar surface code_jpg} (b) and (c), respectively.

\begin{figure*}[htbp]
	\centering
	\includegraphics[width=1\textwidth]{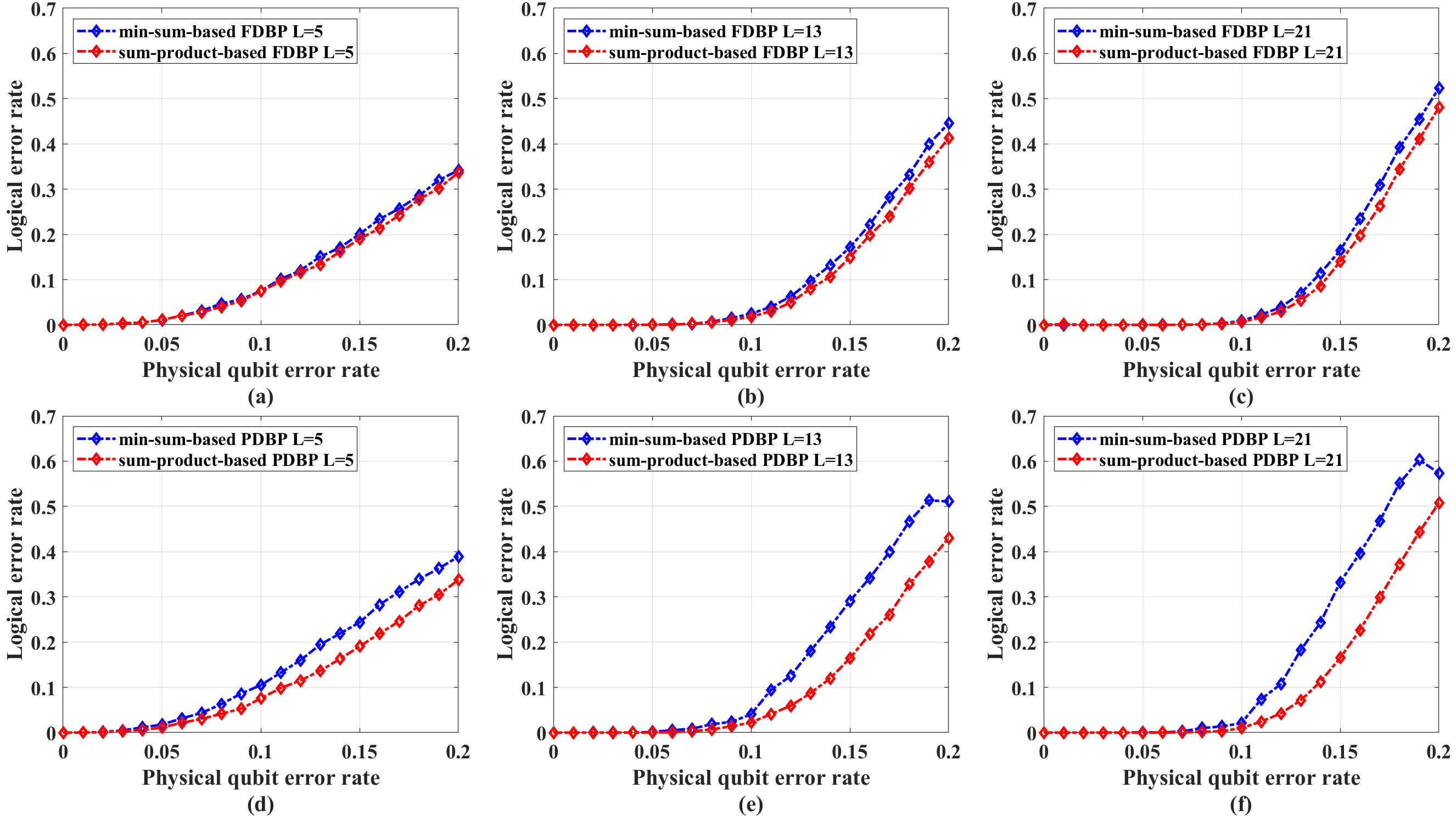}
	\caption{Comparison of LER between min-sum-based FDBP and sum-product-based FDBP on XZZX surface code with lattice size (a) $L=5$, (b) $L=13$ and (c) $L=21$ under depolarizing noise. Comparison of LER between min-sum-based PDBP and sum-product-based PDBP on XZZX surface code with lattice size (d) $L=5$, (e) $L=13$ and (f) $L=21$ under depolarizing noise.}
	\label{fig:ms_ps_compare_FDBP_PDBP_XZZX_depolarizing}
\end{figure*}

\begin{figure*}[htbp]
	\centering
	\includegraphics[width=1\textwidth]{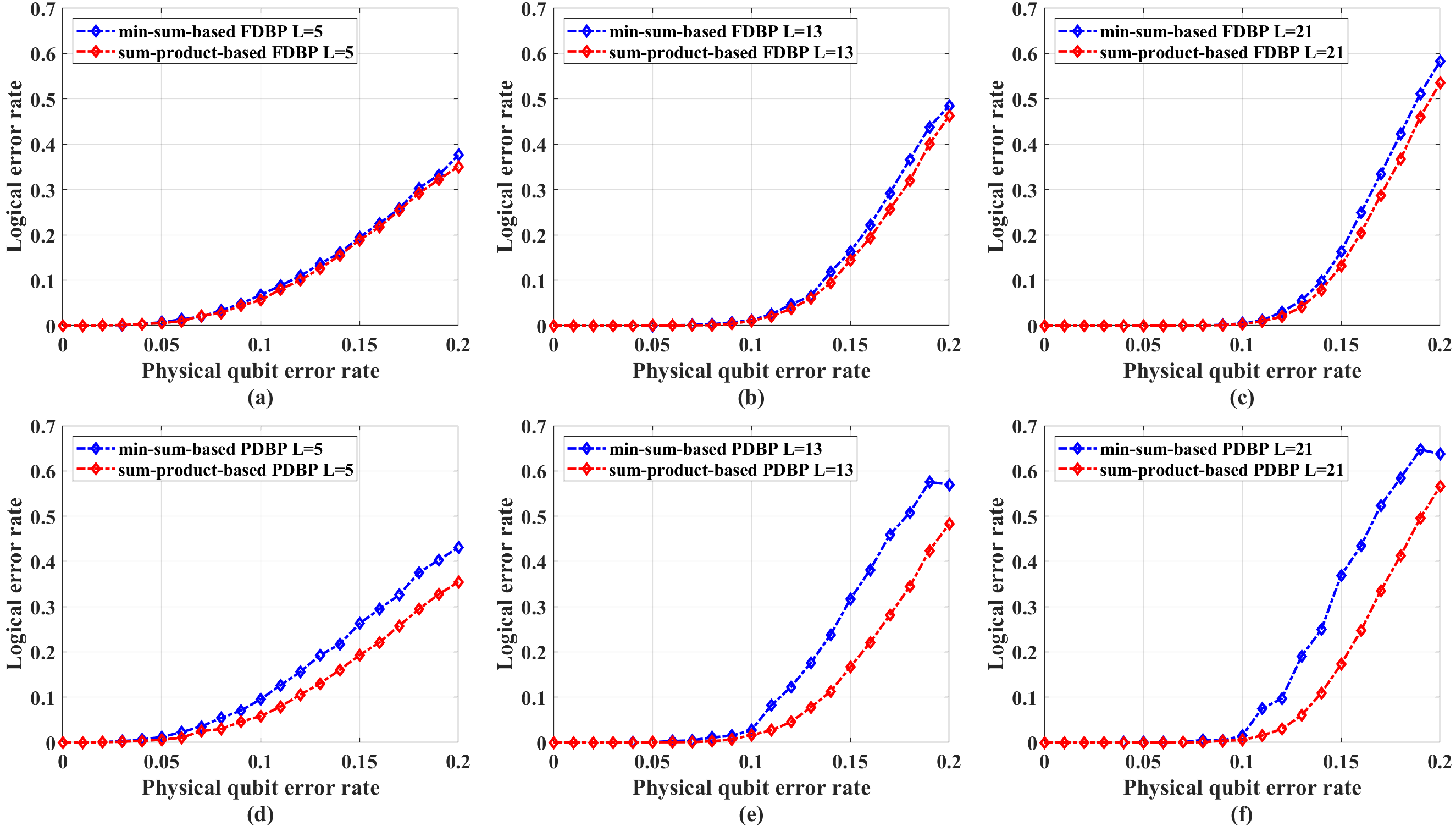}
	\caption{Comparison of LER between min-sum-based FDBP and sum-product-based FDBP on planar surface code with lattice size (a) $L=5$, (b) $L=13$ and (c) $L=21$ under depolarizing noise. Comparison of LER between min-sum-based PDBP and sum-product-based PDBP on planar surface code with lattice size (d) $L=5$, (e) $L=13$ and (f) $L=21$ under depolarizing noise.}
	\label{fig:ms_ps_compare_FDBP_PDBP_surface_depolarizing}
\end{figure*}

\section {Conclusion}
\label{5}
We propose a new method called decoupling representation to represent Pauli operators acting on $n$ qubits as binary vectors with size of $3n$. With decoupling representation, we propose PDBP and FDBP decoding algorithms. These algorithms can apply to both non-CSS and CSS codes to deal with the correlations between the $X$ part and the $Z$ part of the vectors in symplectic representation, which are introduced by Pauli $Y$ errors. Simulation results on XZZX surface code and planar surface code show that PDBP and FDBP greatly outperform SBP under pure $Y$ noise channel and depolarizing noise channel, while under pure $X$ and $Z$ noise channel their performances are the almost the same. The code-capacity thresholds of min-sum-based PDBP and FDBP on XZZX surface codes under pure Pauli $X$, $Z$ and $Y$ noise channels are about $50\%$, $50\%$ and $40\%$, respectively, while the code-capacity threshold of min-sum-based SBP are $50\%$, $50\%$ and $9\%$, respectively. And the code-capacity thresholds of min-sum-based PDBP and FDBP on planar surface codes under pur $X$, $Z$ and $Y$ noise channels are about $9\%$, $9\%$ and $25\%$, respectively, while the code-capacity thresholds of min-sum-based SBP are all about $9\%$. For sum-product-based SBP, PDBP and FDBP, under pure $X$ and $Z$ noise, thier code-capacity thresholds on XZZX surface code and planar surface code are all $50\%$ and $10\%$, respectively. For pure $Y$ noise, due to the decrease in computational accuracy caused by the case that an infinite number minus another infinite number, we can not observe explicit code-capacity thresholds of sum-product-based PDBP and FDBP, but their LER is much lower than that of sum-product-based SBP. These results support that in decoupling representation decoders tend to have more balanced capability to decode Pauli $X$, $Y$ and $Z$ errors.

More surprisingly, FDBP performs better than PDBP and SBP under depolarizing noise channel. For XZZX surface code, the code-capacity thresholds of min-sum-based SBP, PDBP and FDBP under depolarizing noise channel are $13.7\%$, $13.8\%$ and $16\%$ respectively, and those of sum-product-based SBP, PDBP and FDBP are $14.2\%$, $16\%$ and $16.3\%$ respectively.
For planar surface code, the code-capacity thresholds of min-sum-based SBP, PDBP and FDBP under depolarizing noise channel are $14\%$, $13.5\%$ and $15.6\%$ respectively, and those of sum-product-based SBP, PDBP and FDBP are $15\%$, $16\%$ and $16.8\%$ respectively.

However, there are still some major open questions. The first is that it’s still unclear how to reveal the reason for PDBP and FDBP outperform SBP in depolarizing noise channel from the point of the structure of Tanner graphs. Since in depolarizing noise channel, the PDBP and FDBP’s Tanner graphs might have more loops than SBP. Empirically, more loops might result in worse performance of BP. Hence, we can’t analysis the reason by simply comparing the number of loops as in pure $Y$ noise channel. What's more, though the time complexity of PDBP and FDBP is $O(N)$, the time complexity of order$-0$ OSD is $O(N^3)$. In the worst case where all iterations of BP have been carried out, the whole time complexity of BP combined with order$-0$ OSD is $O(N+N^3)$. Thus the second open question is whether PDBP and FDBP can achieve a good perfermance without OSD and how to reduce the time cost of OSD. The third major open question is that it’s still unclear how to implement PDBP and FDBP when there is existing measurement noise.

\section*{Acknowledgements}
We acknowledge useful discussions with Jiahan Chen, Zicheng Wang and Fusheng Yang.

%\section{Another Section}
%\label{sec:another}
%There is no strict length limitation, but the authors are strongly encouraged to keep contents to the strict minimum necessary for peers to reproduce the research described in the paper.
%
%\subsection{A first subsection}
%You are free to use dividers as you see fit.
%\subsection{A note about figures}
%Figures should only occupy the stricly necessary space, in any case individually fitting on a single page. Each figure item should be appropriately labeled and accompanied by a descriptive caption. {\bf SciPost does not accept creative or promotional figures or artist's impressions}; on the other hand, technical drawings and scientifically accurate representations are encouraged.

% TODO: include author contributions
\paragraph{Author contributions}
Z.Z. Yi and Z.P. Liang discussed and worked jointly on this work. K.X. Zhong completed the code of C++ version of FDBP algorithm. All authors helped to write the article.

% TODO: include funding information
\paragraph{Funding information}
This work was supported by the Colleges and Universities Stable Support Project of Shenzhen, China (No.GXWD20220817164856008), the Colleges and Universities Stable Support Project of Shenzhen, China (No.GXWD20220811170225001) and Harbin Institute of Technology, Shenzhen - SpinQ quantum information Joint Research Center Project (No.HITSZ20230111).

\begin{appendix}
\numberwithin{equation}{section}

\section{Mathematical derivation of the message update rules of sum-product-based FDBP}
\label{Appendix A}
In this section, we give the mathematical derivation of the message update rules of sum-product-based FDBP.

Given a syndrome $\textbf{s}=(s_1,\cdots,s_{n-k})$ of a $[[n,k]]$ QLDPC code with decoupled parity-check matrix $H_d$, the FDBP is to calculate the log-likelihood ratio of each bit of $\textbf{e}=(e_1,\cdots,e_{3n})$ conditioned on $\textbf{s}$, namely,
\begin{equation}
	\ln \frac{p\left(\hat{e}_i=0|\textbf{s}\right)}{p\left(\hat{e}_i=1|\textbf{s}\right)}=\ln \frac{p\left(\hat{e}_i=0,\textbf{s}\right)/p\left(\textbf{s}\right)}{p\left(\hat{e}_i=1,\textbf{s}\right)/p\left(\textbf{s}\right)}=\ln \frac{p\left(\hat{e}_i=0,\textbf{s}\right)}{p\left(\hat{e}_i=1,\textbf{s}\right)}
\end{equation}
for all $1\le i\le 3n$, where the following restraint condition should be taken into account.
\begin{equation}
	\label{appendix constraint condition}
	\hat{e}_i+\hat{e}_{(i+n)\ mod\ 3n}+\hat{e}_{(i+2n)\ mod\ 3n}\leq1
\end{equation}

We first calculate $p\left(\hat{e}_i=0,\textbf{s}\right)$. Using the Total Probability Theorem, we have
\begin{equation}
	p\left(\hat{e}_i=0,\textbf{s}\right)=\sum_{\substack{\hat{e}\in GF(2)^{3n}:\hat{e}_i=0\\ H_d\cdot \hat{\textbf{e}}=\textbf{s}}}p(\hat{\textbf{e}})
\end{equation}
We use $H_d^j$ to denote the $j$th row of $H_d$, $\mathcal{N}(j)$ to denote the indices set of nonzero elements of $H_d^j$ and $\mathcal{N}(i)$ to denote the indices set of nonzero elements of the $i$th column of $H_d$. Supposing each check equation is independent with each other, we obtain
\begin{equation}
	\begin{aligned}
		&p\left(\hat{e}_i=0,\textbf{s}\right)=\prod_{j\in \mathcal{N}(i)}\sum_{\substack{\hat{\textbf{e}}\in GF(2)^{3n}:\hat{e}_i=0\\ H_d^j\cdot \hat{\textbf{e}}=s_j}}p(\hat{\textbf{e}})\\
		&=\prod_{j\in \mathcal{N}(i)}\left[\sum_{\substack{\hat{\textbf{e}}\in GF(2)^{3n}:\hat{e}_i=0\\ H_d^j\cdot \hat{\textbf{e}}=s_j}}\prod_{i^{\prime}\in \mathcal{N}(j)/i}p(\hat{e}_{i^{\prime}})\right]p(\hat{e}_{i}=0)
	\end{aligned}
\end{equation}
For $p\left(\hat{e}_i=1,\textbf{s}\right)$, we should additionally take Eq. (\ref{appendix constraint condition}) into account and obtain
\begin{equation}
	\begin{aligned}
		&p\left(\hat{e}_i=0,\textbf{s}\right)=\prod_{j\in \mathcal{N}(i)}\sum_{\substack{\hat{\textbf{e}}\in GF(2)^{3n}:\hat{e}_i=0\\ H_d^j\cdot \hat{\textbf{e}}=s_j}}p(\hat{\textbf{e}})\\
		&=\prod_{j\in \mathcal{N}(i)}\left[\sum_{\substack{\hat{\textbf{e}}\in GF(2)^{3n}:\hat{e}_i=1,\hat{e}_{i_s}=0\\ H_d^j\cdot \hat{\textbf{e}}=s_j\oplus 1}}\prod_{i^{\prime}\in \mathcal{N}(j)/\{i,i_s\}}p(\hat{e}_{i^{\prime}})\right]\\
		&\times p(\hat{e}_{i}=1)p(\hat{e}_{i_s}=0)
	\end{aligned}
\end{equation}
Thus, we have
\begin{equation}
	\begin{aligned}
		&\ln\frac{p\left(\hat{e}_i=0,\textbf{s}\right)}{p\left(\hat{e}_i=1,\textbf{s}\right)}=\ln \frac{p\left(\hat{e}_i=0\right)}{p\left(\hat{e}_i=1\right)p\left(\hat{e}_{i_s}=0\right)}+\ln\prod_{j\in\mathcal{N}(i)}\frac{\sum_{\substack{\hat{\textbf{e}}\in GF(2)^{3n}:\hat{e}_i=0\\ H_d^j\cdot \hat{\textbf{e}}=s_j}}\prod_{i^{\prime}\in \mathcal{N}(j)/i}p(\hat{e}_{i^{\prime}})}{\sum_{\substack{\hat{\textbf{e}}\in GF(2)^{3n}:\hat{e}_i=1,\hat{e}_{i_s}=0\\ H_d^j\cdot \hat{\textbf{e}}=s_j\oplus 1}}\prod_{i^{\prime}\in \mathcal{N}(j)/\{i,i_s\}}p(\hat{e}_{i^{\prime}})}
	\end{aligned}
\end{equation}
if $s_j=0$, we have
\begin{equation}
	\begin{aligned}
		\label{sj=0}
		&\ln\prod_{j\in\mathcal{N}(i)}\frac{\sum_{\substack{\hat{\textbf{e}}\in GF(2)^{3n}:\hat{e}_i=0\\ H_d^j\cdot \hat{\textbf{e}}=s_j}}\prod_{i^{\prime}\in \mathcal{N}(j)/i}p(\hat{e}_{i^{\prime}})}{\sum_{\substack{\hat{\textbf{e}}\in GF(2)^{3n}:\hat{e}_i=1,\hat{e}_{i_s}=0\\ H_d^j\cdot \hat{\textbf{e}}=s_j\oplus 1}}\prod_{i^{\prime}\in \mathcal{N}(j)/\{i,i_s\}}p(\hat{e}_{i^{\prime}})}\\
		&=\sum_{j\in\mathcal{N}(i)}\ln \frac{\sum_{\substack{\hat{\textbf{e}}\in GF(2)^{3n}:\hat{e}_i=0\\ H_d^j\cdot \hat{\textbf{e}}=s_j}}\prod_{i^{\prime}\in \mathcal{N}(j)/i}p(\hat{e}_{i^{\prime}})}{\sum_{\substack{\hat{\textbf{e}}\in GF(2)^{3n}:\hat{e}_i=1,\hat{e}_{i_s}=0\\ H_d^j\cdot \hat{\textbf{e}}=s_j\oplus 1}}\prod_{i^{\prime}\in \mathcal{N}(j)/\{i,i_s\}}p(\hat{e}_{i^{\prime}})}
	\end{aligned}
\end{equation}
The numerator of Eq. (\ref{sj=0}) means among the bits of $\hat{\textbf{e}}$ indexed $i^{\prime}\in \mathcal{N}(j)/i$ there are an even number of $1$s. Similarly, the denominator of Eq. (\ref{sj=0}) means among the bits of $\hat{\textbf{e}}$ indexed $i^{\prime}\in \mathcal{N}(j)/\{i,i_s\}$ there are an odd number of $1$. Thus, we have, 
\begin{equation}
	\begin{aligned}
		\label{1sj=0}
		&\sum_{j\in\mathcal{N}(i)}\ln \frac{\sum_{\substack{\hat{\textbf{e}}\in GF(2)^{3n}:\hat{e}_i=0\\ H_d^j\cdot \hat{\textbf{e}}=s_j}}\prod_{i^{\prime}\in \mathcal{N}(j)/i}p(\hat{e}_{i^{\prime}})}{\sum_{\substack{\hat{\textbf{e}}\in GF(2)^{3n}:\hat{e}_i=1,\hat{e}_{i_s}=0\\ H_d^j\cdot \hat{\textbf{e}}=s_j\oplus 1}}\prod_{i^{\prime}\in \mathcal{N}(j)/\{i,i_s\}}p(\hat{e}_{i^{\prime}})}\\
		&=\sum_{j\in\mathcal{N}(i)}\ln \frac{1+\prod_{i^{\prime}\in \mathcal{N}(j)/i}\tanh (\frac{\ln \frac{p(\hat{e}_{i^{\prime}}=0)}{p(\hat{e}_{i^{\prime}}=1)}}{2})}{1-\prod_{i^{\prime}\in \mathcal{N}(j)/\{i,i_s\}}\tanh (\frac{\ln \frac{p(\hat{e}_{i^{\prime}}=0)}{p(\hat{e}_{i^{\prime}}=1)}}{2})}
	\end{aligned}
\end{equation}
Similarly, if $s_j=1$, we have
\begin{equation}
	\begin{aligned}
		\label{sj=1}
		&\ln\prod_{j\in\mathcal{N}(i)}\frac{\sum_{\substack{\hat{\textbf{e}}\in GF(2)^{3n}:\hat{e}_i=0\\ H_d^j\cdot \hat{\textbf{e}}=s_j}}\prod_{i^{\prime}\in \mathcal{N}(j)/i}p(\hat{e}_{i^{\prime}})}{\sum_{\substack{\hat{\textbf{e}}\in GF(2)^{3n}:\hat{e}_i=1,\hat{e}_{i_s}=0\\ H_d^j\cdot \hat{\textbf{e}}=s_j\oplus 1}}\prod_{i^{\prime}\in \mathcal{N}(j)/\{i,i_s\}}p(\hat{e}_{i^{\prime}})}\\
		&=\sum_{j\in\mathcal{N}(i)}\ln \frac{\sum_{\substack{\hat{\textbf{e}}\in GF(2)^{3n}:\hat{e}_i=0\\ H_d^j\cdot \hat{\textbf{e}}=1}}\prod_{i^{\prime}\in \mathcal{N}(j)/i}p(\hat{e}_{i^{\prime}})}{\sum_{\substack{\hat{\textbf{e}}\in GF(2)^{3n}:\hat{e}_i=1,\hat{e}_{i_s}=0\\ H_d^j\cdot \hat{\textbf{e}}=0}}\prod_{i^{\prime}\in \mathcal{N}(j)/\{i,i_s\}}p(\hat{e}_{i^{\prime}})}\\
		&=\sum_{j\in\mathcal{N}(i)}\ln \frac{1-\prod_{i^{\prime}\in \mathcal{N}(j)/i}\tanh (\frac{\ln \frac{p(\hat{e}_{i^{\prime}}=0)}{p(\hat{e}_{i^{\prime}}=1)}}{2})}{1+\prod_{i^{\prime}\in \mathcal{N}(j)/\{i,i_s\}}\tanh (\frac{\ln \frac{p(\hat{e}_{i^{\prime}}=0)}{p(\hat{e}_{i^{\prime}}=1)}}{2})}
	\end{aligned}
\end{equation}
Thus, we can obtain the message update rules of sum-product-based FDBP algorithm, as describing in \textbf{Algorithm} \ref{Sum-product-based FDBP}.

\section {Mathematical derivation of the message update rules of min-sum-based FDBP}
\label{Appendix B}
In this section, we give the mathematical derivation of the message update rules of min-sum-based FDBP.

According to \textbf{Algorithm} \ref{Sum-product-based FDBP}, when $s_j=0$, the parity-to-variable message $m_{c_j\Rightarrow v_i}$ is
\begin{equation}
	\begin{aligned}
		m_{c_j\Rightarrow v_i}= \ln \frac{1+\prod_{v_{i^{\prime}}\in \mathcal{N}(c_j)\slash v_i}\tanh \left(\frac{m_{v_{i^{\prime}}\Rightarrow c_j}}{2}\right)}{1-\prod_{v_{i^{\prime}}\in \mathcal{N}(c_j)\slash \{v_i,v_{i_s}\}} \tanh \left(\frac{m_{v_{i^{\prime}}\Rightarrow c_j}}{2}\right)}
	\end{aligned}
\end{equation}
According to \cite{fossorier1999reduced}, we have
\begin{equation}
	\begin{aligned}
		&m_{c_j\Rightarrow v_i}= \ln \frac{1+\prod_{v_{i^{\prime}}\in \mathcal{N}(c_j)\slash v_i}\tanh \left(\frac{m_{v_{i^{\prime}}\Rightarrow c_j}}{2}\right)}{1-\prod_{v_{i^{\prime}}\in \mathcal{N}(c_j)\slash v_i}\tanh \left(\frac{m_{v_{i^{\prime}}\Rightarrow c_j}}{2}\right)}+\ln \frac{1-\prod_{v_{i^{\prime}}\in \mathcal{N}(c_j)\slash v_i}\tanh \left(\frac{m_{v_{i^{\prime}}\Rightarrow c_j}}{2}\right)}{1-\prod_{v_{i^{\prime}}\in \mathcal{N}(c_j)\slash \{v_i,v_{i_s}\}} \tanh \left(\frac{m_{v_{i^{\prime}}\Rightarrow c_j}}{2}\right)}\\
		&\approx \left(\prod_{v_{i^{\prime}}\in \mathcal{N}(c_j)\slash v_i}sign(m_{v_{i^{\prime}}\Rightarrow c_j})\times \min_{v_{i^{\prime}}\in \mathcal{N}(c_j)\slash v_i}(\lvert m_{v_{i^{\prime}}\Rightarrow c_j} \rvert)\right)+\ln \frac{1-\prod_{v_{i^{\prime}}\in \mathcal{N}(c_j)\slash v_i}\tanh \left(\frac{m_{v_{i^{\prime}}\Rightarrow c_j}}{2}\right)}{1-\prod_{v_{i^{\prime}}\in \mathcal{N}(c_j)\slash \{v_i,v_{i_s}\}} \tanh \left(\frac{m_{v_{i^{\prime}}\Rightarrow c_j}}{2}\right)}
	\end{aligned}
\end{equation}
Similarly, when $s_j=1$, we have 
\begin{equation}
	\begin{aligned}
		&m_{c_j\Rightarrow v_i}= -\ln \frac{1+\prod_{v_{i^{\prime}}\in \mathcal{N}(c_j)\slash \{v_i,v_{i_s}\}}\tanh \left(\frac{m_{v_{i^{\prime}}\Rightarrow c_j}}{2}\right)}{1-\prod_{\mathcal{N}(c_j)\slash \{v_i,v_{i_s}\}}\tanh \left(\frac{m_{v_{i^{\prime}}\Rightarrow c_j}}{2}\right)}+\ln \frac{1-\prod_{v_{i^{\prime}}\in \mathcal{N}(c_j)\slash v_i}\tanh \left(\frac{m_{v_{i^{\prime}}\Rightarrow c_j}}{2}\right)}{1-\prod_{v_{i^{\prime}}\in \mathcal{N}(c_j)\slash \{v_i,v_{i_s}\}} \tanh \left(\frac{m_{v_{i^{\prime}}\Rightarrow c_j}}{2}\right)}\\
		&\approx \left(-1\times\prod_{v_{i^{\prime}}\in \mathcal{N}(c_j)\slash \{v_i,v_{i_s}\}}sign(m_{v_{i^{\prime}}\Rightarrow c_j})\times \min_{v_{i^{\prime}}\in \mathcal{N}(c_j)\slash v_i}(\lvert m_{v_{i^{\prime}}\Rightarrow c_j} \rvert)\right)\\
		&+\ln \frac{1-\prod_{v_{i^{\prime}}\in \mathcal{N}(c_j)\slash v_i}\tanh \left(\frac{m_{v_{i^{\prime}}\Rightarrow c_j}}{2}\right)}{1-\prod_{v_{i^{\prime}}\in \mathcal{N}(c_j)\slash \{v_i,v_{i_s}\}} \tanh \left(\frac{m_{v_{i^{\prime}}\Rightarrow c_j}}{2}\right)}
	\end{aligned}
\end{equation}

%\section{About references}
%Your references should start with the comma-separated author list (initials + last name), the publication title in italics, the journal reference with volume in bold, start page number, publication year in parenthesis, completed by the DOI link (linking must be implemented before publication). If using BiBTeX, please use the style files provided  on \url{https://scipost.org/submissions/author_guidelines}. If you are using our LaTeX template, simply add
%\begin{verbatim}
%\bibliography{your_bibtex_file}
%\end{verbatim}
%at the end of your document. If you are not using our LaTeX template, please still use our bibstyle as
%\begin{verbatim}
%\bibliographystyle{SciPost_bibstyle}
%\end{verbatim}
%in order to simplify the production of your paper.
\end{appendix}

%%%%%%%%% END TODO: CONTENTS

%%%%%%%%%% TODO: BIBLIOGRAPHY
% Provide your bibliography here. You have two options:

%%% FIRST OPTION
% Write your entries here directly, following the example below, including:
% Author(s), Title, Journal Ref. with year in parentheses at the end, followed by the DOI number.

%\begin{thebibliography}{99}
%\bibitem{1931_Bethe_ZP_71} H. A. Bethe, {\it Zur Theorie der Metalle. i. Eigenwerte und Eigenfunktionen der linearen Atomkette}, Zeit. f{\"u}r Phys. {\bf 71}, 205 (1931), \doi{10.1007\%2FBF01341708}.
%\bibitem{arXiv:1108.2700} P. Ginsparg, {\it It was twenty years ago today... }, \url{http://arxiv.org/abs/1108.2700}.
%\end{thebibliography}

%%% SECOND OPTION
% Use your bibtex library, formatted by the SciPost style file.
\bibliography{bibliography.bib}

%%%%%%%%%% END TODO: BIBLIOGRAPHY

\end{document}